\newtheorem{proposition}{Proposition}
\newtheorem{lemma}{Lemma}
\newtheorem{definition}{Definition}
\theoremstyle{remark}
\newtheorem{remark}{Remark}
\newcommand{\R}{\mathbb{R}}
\newcommand{\Z}{\mathbb{Z}}
\newcommand{\mB}{\mathcal{B}}
\newcommand{\mN}{\mathcal{N}}
\newcommand{\mQ}{\mathcal{Q}}
\newcommand{\mG}{\mathcal{G}}
\newcommand{\dd}{\mathrm{d}}
\newcommand{\mm}{\mathrm{m}}
\newcommand{\bp}{\mathbf{p}}
\newcommand{\bOmega}{\mathbf{\Omega}}
\newcommand{\tp}{\mathsf{T}}
\newcommand{\argmin}{\operatorname{argmin}}
\newcommand{\sign}{\operatorname{sgn}}
\newcommand{\proj}{\operatorname{proj}}
\newcommand{\bd}{\mathbf{bd\,}}
\newcommand{\interior}{\mathbf{int\,}}
\newcommand{\exit}{\mathrm{exit}}
\newcommand{\cnt}{\mathrm{cnt}}
\newcommand{\norm}[1]{\left\Vert #1 \right\Vert}
\newcommand{\abs}[1]{\left\vert #1 \right\vert}
\renewcommand{\tilde}{\widetilde}
\begin{document}

\title{\huge{Autonomous and Resilient Control for Optimal LEO Satellite Constellation Coverage Against Space Threats}}

\author{Yuhan~Zhao
        and~Quanyan~Zhu%
\thanks{Yuhan~Zhao and Quanyan~Zhu are with the Department of Electrical and Computer Engineering, New York University, Brooklyn, NY, 11201 USA. E-mail: \{yhzhao, qz494\}@nyu.edu.}%
}

\maketitle
\thispagestyle{empty}

\begin{abstract}
LEO satellite constellation coverage has served as the base platform for various space applications. However, the rapidly evolving security environment such as orbit debris and adversarial space threats are greatly endangering the security of satellite constellation and integrity of the satellite constellation coverage. As on-orbit repairs are challenging, a distributed and autonomous protection mechanism is necessary to ensure the adaptation and self-healing of the satellite constellation coverage from different attacks. 
To this end, we establish an integrative and distributed framework to enable resilient satellite constellation coverage planning and control in a single orbit. Each satellite can make decisions individually to recover from adversarial and non-adversarial attacks and keep providing coverage service. We first provide models and methodologies to measure the coverage performance. Then, we formulate the joint resilient coverage planning-control problem as a two-stage problem. A coverage game is proposed to find the equilibrium constellation deployment for resilient coverage planning and an agent-based algorithm is developed to compute the equilibrium. The multi-waypoint Model Predictive Control (MPC) methodology is adopted to achieve autonomous self-healing control. Finally, we use a typical LEO satellite constellation as a case study to corroborate the results.
\end{abstract}



\section{Introduction} \label{sec:intro}
Recent advances in space technology research and development have inspired considerable applications for Low Earth Orbit (LEO) satellite constellations, such as positioning \cite{xu2007gps}, communications \cite{roddy2006satellite}, and remote sensing \cite{campbell2011introduction}. Among all the research and applications, satellite constellation coverage plays a fundamental role because it serves as the base platform for other space applications. 
In the satellite constellation coverage, multiple LEO satellites work cooperatively to provide global or regional coverage service to the ground \cite{ulybyshev2008satellite,lee2020satellite,al2021optimal}. 
A broad coverage can enable not only critical military operations such as strategic guidance and real-time surveillance for the ground agents but also civil applications such as satellite Internet in remote areas. 
\textcolor{black}{It also lays the foundation of future 6G communication technologies \cite{zhen2020energy,giordani2020non,yaacoub2020key}.}
Due to its prominent importance, various models and algorithms \cite{al2021analytic,okati2020downlink,savitri2017satellite,hitomi2018constellation} have focused on the optimal satellite constellation design and deployment to maximize the joint coverage. 

However, as space systems become more critical in different fields, growing security threats in the space domain make satellites more vulnerable during the operation, jeopardizing the coverage performance. For example, physical attacks \cite{liu2020space} such as missiles and laser attacks can directly destroy satellite entities. Cyber attacks such as jamming \cite{reiffen1964parametric,rausch2006jamming} can disrupt and block communication channels, causing degeneration or failure to the coverage service. It is estimated in \cite{manulis2020cyber} that there have been more than one hundred satellites attacked since 1997, including jamming and hijacking, many of which have caused significant consequences due to losses of navigation and communication. In addition to adversarial attacks, non-adversarial attacks such as space debris can also put satellites at risk. For example, in 2009, the collision of two communication satellites created more than 2000 pieces of debris \cite{collision_2009}, posing a threat to a large number of satellites in orbit. 
The hostile and intelligent attacks and the environmental hazard have led to a growing concern for space applications.

\begin{figure}
    \centering
    \includegraphics[scale=0.36]{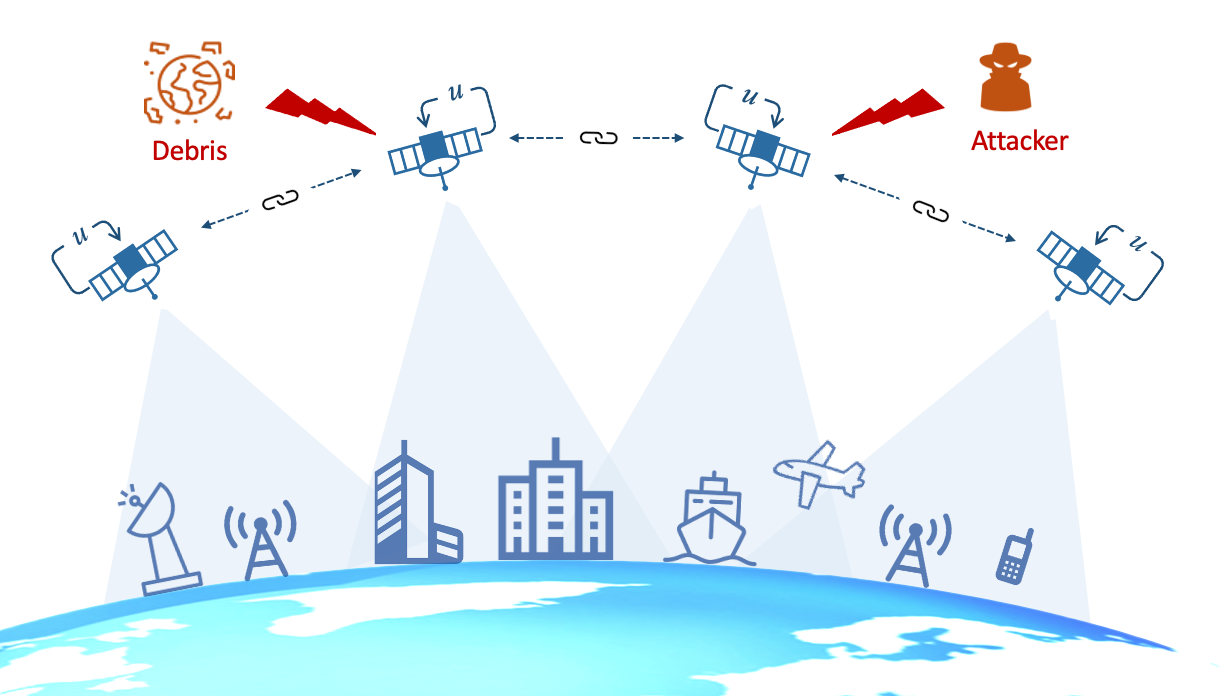}
    \caption{Distributed and resilient satellite constellation coverage framework: Intelligent attackers and space debris can hurt individual satellite's coverage capability. Each satellite needs to communicate only with neighbors to adapt to attacks and achieve autonomous self-healing control.}
    \label{fig:illustration}
\end{figure}

To address security challenges, we need a reliable mechanism to protect satellite constellations and improve the resilience of the satellite constellation coverage. Resilience not only ensures the survivability of satellite constellations under successful attacks but also provides flexibility and adaptability to cope with space security threats. 
However, the current solutions are not sufficient.
On the one hand, on-orbit repairs are considerably costly and challenging to cope with potential attacks. On the other hand, centralized approaches for optimal LEO satellite constellation designs such as \cite{lee2020satellite,savitri2017satellite} are also not sufficient to address security challenges for the following reasons. First, many constellation designs do not take satellite control into consideration, which makes it more challenging for the satellite constellation to adapt to new attacks. Second, since satellites scatter in space, they cannot connect to the same ground station simultaneously, making their coordination difficult.

The distributed architecture can fulfill the requirement for real-time attack adaptation and provide more flexibility to cope with environmental changes, which increases the chance of survivability.
In this work, we develop a distributed control framework that enables a resilient satellite constellation coverage planning and control in a single orbit, \textcolor{black}{so that every satellite can make individual decisions to mitigate both adversarial and non-adversarial attacks.}
More specifically, We first propose the average coverage cost to measure the coverage performance of a single-orbit satellite constellation. Next, we formulate a two-stage planning-control problem to jointly provide resilient and distributed coverage planning and self-healing control. At the planning stage, we propose a coverage game among all satellites to find the equilibrium (also local optimal) coverage deployment under various attack scenarios. At the control stage, we design multi-waypoint model predictive control to achieve autonomous self-healing for the satellite constellation. We also use a case study to demonstrate that our framework provides resilience to the satellite constellation coverage problem.

The contribution of this paper is summarized as follows.
\begin{itemize}
    \item We establish an integrative and distributed framework that enables the resilient and autonomous satellite constellation coverage planning-control against various space security threats.
    \item We propose a coverage game to find the equilibrium (also local optimal) satellite constellation deployment after security attacks. We also develop an agent-based iterative algorithm to compute the equilibrium of the coverage game. 
    \item We provide a thorough analysis of the optimality property of the coverage game equilibrium generated by our algorithm.  
    \item We adopt multi-waypoint model predictive control to achieve autonomous self-healing control to actively adapt to adversarial and non-adversarial space security attacks.
\end{itemize}

\subsection{Related Work}
Most research in the satellite domain studies the satellite coverage design and the satellite control separately. We discuss the related work in these two areas as well as some works in resilient control and its applications.

\subsubsection{Satellite Coverage Design}
The research in LEO satellite constellation coverage can be roughly divided into global coverage design and regional coverage design. Global coverage design focuses on designing satellite constellations to provide continuous global coverage. Classical approaches include Walker constellation \cite{walker1984satellite,walker1970circular} and street-of-coverage methods \cite{luders1961satellite,rider1986analytic,beste1978design}. The Walker constellation specifies circular orbits with the same altitude. Different orbits are distinguished by the orbital inclinations. The orbital inclinations and satellite deployments are selected for the largest coverage. 
In the street-of-coverage method, each orbit provides a strip of coverage region. By computing the orbital inclination, the method determines the least orbits which are needed to maximize the overall coverage. Lang and Adams in \cite{lang1998comparison} have provided a review and comparison between Walker constellation coverage and street-of-coverage. 
In addition to classical approaches, elliptic constellations are also used to provide the global coverage \cite{draim1987common,mortari2004flower}. Draim in \cite{draim1987common} has shown that with only 4 elliptic satellites, the global and continuous line-of-sight coverage can be achieved. 
Al-Hourani in \cite{al2021optimal} has proposed an analytic framework based on stochastic geometry to determined the optimal orbital altitude for the maximal satellite constellation coverage.

Regional satellite constellations only cover specific regions on the earth. For example, the Indian Regional Navigation Satellite System (IRNSS) \cite{irnss} and the Quasi-Zenith Satellite System (QZSS) \cite{qzss}. The regional constellation design with repeating ground track orbits has been studied in \cite{hanson1990designing}. 
Wang et al. in \cite{wang2008optimization} have proposed a genetic algorithm to maximize the regional coverage of the reconnaissance satellite constellation and minimize the number of satellites. 
Meziane-Tani et al. in \cite{meziane2016optimization} have designed the regional coverage satellite constellation with reduced constellation size by using the evolutionary optimization method. 
Lee and Ho in \cite{lee2020satellite} have approached the regional constellation design by proposing the circular convolution formulation and using binary integer linear programming to select the optimal constellation pattern.

\subsubsection{Distributed Satellite Control}

In the research of multi-satellite control, the notions of ``satellite swarm" and ``satellite cluster" are more involved. Many distributed control algorithms have been studied for satellite swarms and satellite clusters. Wang et al. in \cite{wang2019self} have developed a distributed algorithm based on potential field and satellite relative dynamics to control a satellite swarm.
Izzo and Pettazzi in \cite{izzo2007autonomous} have exploited the behavior-based approach for autonomous and distributed path planning of a satellite swarm.
Works such as \cite{nag2013behaviour,massioni2010decomposition} have also developed feasible distributed control methods for satellite swarms.

\subsubsection{Resilient Control}
Although few works have focused on resilient control in the space domain, it has been studied in other fields with different approaches. For example, in cyber-physical systems, Zhu and Ba\c{s}ar have proposed a game-theoretic framework in \cite{zhu2015game} to cope with potential cyber attacks and maintain the system performance. The security and resilience issues of cyber-physical systems have been extensively investigated in \cite{zhu2020crosslayer}. In network systems, a dynamical game framework has been proposed in \cite{chen2019dynamic} to perform the resilient network design and control for the connectivity of infrastructure network systems. The transactive resilience of microgrid systems has been discussed in \cite{chen2021transactive} through a contract-theoretic approach. The monograph \cite{chen2020agame} has thoroughly investigated the resilient analysis and design for independent network systems. In multi-agent systems, Chen and Zhu in \cite{chen2019control} have studied the resilient connectivity control of multi-robot systems with game theory. Huang et al. in \cite{huang2020dynamic} have also provided a comprehensive review on the robust and resilient design and control with dynamic game theory.

\subsection{Organization of the Paper}
The rest of the paper is organized as follows. Section \ref{sec:model} presents models and methodologies to measure the coverage performance and the formulation of the two-stage control-planning problem. Section \ref{sec:coverage_planning} discusses the coverage game for resilient planning, the agent-based algorithm the equilibrium deployment, and the related analysis. We consolidate our distributed planning-control framework in Section \ref{sec:synthesis}. Section \ref{sec:case_study} demonstrate the resilience of our framework with four case studies. Section \ref{sec:conclusion} concludes the paper.

\subsection{Notations}
We use $n$ as the total number of satellites and $\mN = \{1,,\dots, n\}$ as the set of all satellites; $r_s$ is the orbital radius; $\omega$ is the satellite mean motion; $T_s = 2\pi / \omega$ is the period.  For satellite $i \in \mN$, $\mN_i$ represents the neighbor set; $\alpha_i \in \R$ is the coverage angle; $\psi^\mm_i \in \R$ is the maximum coverage intensity. For satellite controls, $T_f \in \R$ denotes the control horizon; $u^\mm_i \in \R$ is the the maximum thrust-to-mass ratio; $p^\mm \in \R$ is the maximum deviation distance for CW equations to hold.
We use $p_i$ as the relative position vector and write $p_{ix}, p_{iy}$ as its scalar components along the $x$-axis and $y$-axis. We use bold $\bp$ as the aggregated vector, i.e., $\bp = \{p_1, \dots, p_n\} \in \R^{\sum_i \dim p_i}$. 
The notation $-i$ represents all satellites except for satellite $i$. For example, $p_{-i}$ represents $\{p_1, \dots, p_{i-1}, p_{i+1}, \dots, p_n \}$. Sometimes we write $\{p_i, p_{-i}\}$ instead of $\bp$ to emphasize the role of $p_i$ in $\bp$.
We use the weighted norm $\norm{x}_A = \sqrt{x^\tp A x}$ and the row-vector partial derivative $\frac{\partial f}{\partial x}$.

\section{System Model} \label{sec:model}
In this section, we first introduce the satellite dynamic model for control. Then we introduce the metric to measure the satellite constellation coverage performance. Finally, we formulate the coverage planning-control problem as a two-stage problem for a resilient and distributed architecture.

\subsection{Satellite Relative Dynamics}
The relative dynamics characterizes the motion of a (deputy) satellite with respect to another (chief) satellite \cite{he2008dynamics,tong2007relative}.
We consider LEO satellites that operate in circular orbits due to the small orbit eccentricity. We attach a moving frame $S_i$-$xyz$ called the \emph{local vertical local horizontal} (LVLH) frame to satellite $i$ shown in Fig.~\ref{fig:sat_lvlh}, where $S_i$ is the chief position, $x$-axis points outward along the radial direction, $y$-axis points to the velocity direction, and $z$-axis is perpendicular the orbital plane. When satellite $i$ deviates from its chief position, the chief becomes virtual, and its motion can be captured by the Clohessy-Wiltshire (CW) equations \cite{clohessy1960terminal} if the deviation is sufficiently small compared with the orbital radius:  
\begin{align}
    \delta \ddot{x} - 3\omega^2 \delta x - 2\omega \delta\dot{y} & = u_x, \label{eq:cw.1} \\
    \delta \ddot{y} + 2\omega \delta\dot{x} & = u_y, \label{eq:cw.2} \\ 
    \delta \ddot{z} + \omega^2 \delta z &= u_z, \label{eq:cw.3}
\end{align}
where $(\delta x, \delta y, \delta z) \in \R^3$ and $(u_x, u_y, y_z) \in \R^3$ represent the displacements and external thrusts along $x,y,z$-axis respectively. The CW equations have been adopted for satellite control in different fields \cite{stupik2012optimal,sullivan2017comprehensive,ye2020satellite,gong2020pursuit} such as satellite pursuit-evasion. 
We note that the motion along $z$-axis is independent from the one in the orbital plane ($xy$-plane). The satellite can remain static in $z$-axis if we use zero control $u_z = 0$ and zero initial condition $\delta z = \delta \dot{z} = 0$. In this work, we constrain satellites to operate in the same orbit plane. Therefore, we focus on satellite controls in $xy$-plane and ignore \eqref{eq:cw.3}. Let $p_i = [p_{ix} \ p_{iy}] \in \R^2$ and $v_i = [v_{ix} \ v_{iy}] \in \R^2$ denote the relative position and relative velocity of satellite $i$ in the LVLH frame, respectively. Let $u_i = [u_{ix} \ u_{iy}] \in \R^2$ be the external control. We arrive at
\begin{equation}
\small
    \label{eq:control}
    \begin{bmatrix} \dot{p}_i \\ \dot{v}_i \end{bmatrix} = 
    \begin{bmatrix} 0 & 0 & 1 & 0 \\ 0 & 0 & 0 & 1 \\ 3\omega^2 & 0 & 0 & 2\omega \\ 0 & 0 & -2\omega & 0 \end{bmatrix} \begin{bmatrix} p_i \\ v_i \end{bmatrix} + 
    \begin{bmatrix} 0_{2\times 1} \\ u_i \end{bmatrix} := A \begin{bmatrix} p_i \\ v_i \end{bmatrix} + B u_i,
\normalsize
\end{equation}

\begin{remark}
One reason to keep satellites in the same orbit is to focus on the collective coverage behavior of the satellite constellation. When some satellite jumps out of the current orbit plane, it will form a new orbit to operate. This will the jeopardize entire constellation and hence affect the constellation coverage performance. Satellite coordination becomes more challenging when satellites are in different orbits. 
\end{remark}

\begin{remark}
In practice, there are external disturbances such as gravitational perturbations from other satellites, which can deviate the satellite from the current orbit plane. However, since $z$-axis motion can be controlled independently, in case of perturbations, we can simply deploy another controller to stabilize $z$-axis motion and keep the satellite in the orbit plane.
\end{remark}

CW equations rely on linearization. They become less accurate when the satellite is too far away from its chief position. We can use the constraint $\norm{p_i}_2 \leq p^\mm$ to ensure the accuracy of the satellite dynamics \eqref{eq:control}. 

When satellite $i$ moves to the relative position $p_i$, it forms a deviation angle $\Delta \phi_i(p_i)$ with $x$-axis of the LVLH frame shown in Fig.~\ref{fig:sat_lvlh}. Using geometry, we can compute
\begin{equation}
\label{eq:Dphi}
    \Delta \phi_i(p_i) = \sign(p_{iy}) \arccos \left( \frac{p_{ix}+r_s}{ \sqrt{(p_{ix}+r_s)^2 + p_{iy}^2} } \right).
\normalsize
\end{equation}

\begin{figure}
    \centering
    \includegraphics[height=3.2cm]{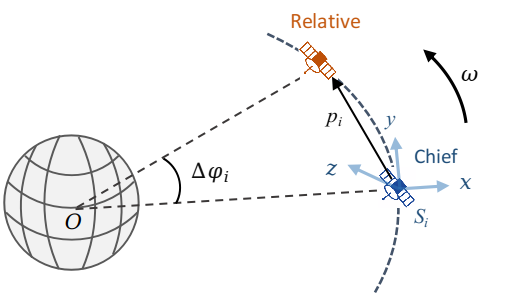}
    \caption{The LVLH frame of satellite $i$ is attached on the chief (blue) position. The deputy (red) position is characterized by the relative position $p_i$.}
    \label{fig:sat_lvlh}
\end{figure}

\subsection{Satellite Configuration}
We consider $n$ satellite rotating in the same circular orbit numbered from $1$ to $n$. The index increasing direction is the same as the satellite moving direction. 

\begin{definition}
A satellite configuration (or configuration) of $n$ LEO satellites in the same orbit refers to a stable formation such that the relative position between any two satellites remains constant.
\end{definition}

In the configuration, all satellites rotate passively in the orbit; i.e., the satellites always maintain the same formation and do not use external thrusts. The only difference between two configurations are relative positions. Therefore, we can measure one configuration with respect to a particular configuration called the \emph{initial configuration} (IC). We set the satellite positions in the IC as the chief positions. Then the position of satellite $i$ in a new configuration can be characterized by the relative position $p_i$. Therefore, given an IC, any new configuration can be characterized the relative position vector $\bp = \{ p_1, \dots, p_n \} \in \R^{2n}$. In particular, $\bp^{\mathrm{IC}} = 0_{2n\times 1}$. Also from \eqref{eq:Dphi}, we have $\Delta \phi_i(p_i) = 0$ in the IC for every $i \in \mN$.

On the other hand, satellites are not stationary to the earth. To characterize satellite positions with respect to the ground, we first introduce the geocentric polar frame $O$-$L$ shown in Fig.~\ref{fig:sat_configuration}, which is fixed on the earth's surface. Then we define the configuration angle.

\begin{definition}
The configuration angle $\phi_i$ for satellite $i$ is the angle between $x$-axis of the LVLH frame and $OL$ axis, which changes with measuring time $\tau$ and the relative position $p_i$:
\begin{equation}
    \label{eq:phi_i}
    \phi_i(p_i, \tau) = \phi_i^0 + \Delta \phi_i(p_i) + \omega \tau,
\end{equation}
where $\phi^0_i$ refers to the initial angle when $\tau = 0$. 
\end{definition}

In fact, since the satellite motion is is periodic, we can measure the satellite configuration angle in the $O$-$L$ frame starting from any time $\tau$. For simplicity, we set $\tau=0$ when $x$-axis of satellite $1$ coincides with $OL$-axis. In this case, we always set $\phi^0_1 = 0$. 

\begin{remark}
Satellites are not stationary relative to the ground but they provide coverage service to the ground. We need configuration angles to characterize the coverage performance of individual and group satellites later. For the configuration measurement, the vector $\bp$ with the IC is sufficient.
\end{remark}

\begin{figure}
    \centering
    \includegraphics[height=4cm]{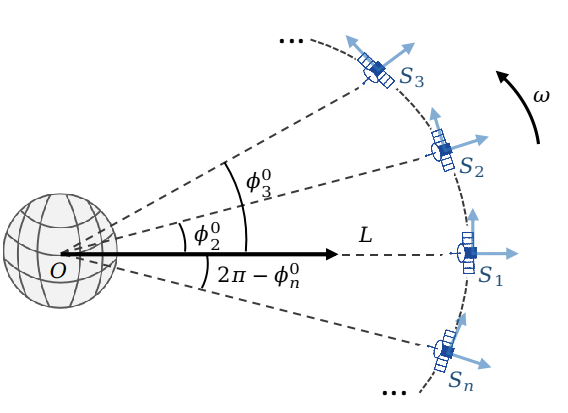}
    \caption{Satellite configuration at $\tau = 0$.}
    \label{fig:sat_configuration}
\end{figure}

\subsection{Coverage Measure}
After forming a new configuration, satellites begin to provide stable coverage service. Let $\mu : \R \to \R$ be the demand intensity on the orbit ground track measured in $O$-$L$ frame. Each surface point corresponds to a demand intensity, which can be obtained by data in practice.
Due to the short period of LEO satellites, we assume that $\mu$ is time-invariant. For the computational purpose, we extend $\mu(\theta)$ to a periodic function with period $2\pi$, i.e., $\mu(\theta) = \mu(\theta + 2k\pi)$, $k \in \Z$. 

At time $\tau \in \R$, satellite $i$ covers part of the earth surface shown in Fig.~\ref{fig:coverage_geometr}. The coverage angle $\alpha_i$ can be computed by field of view (FOV) angle and the coverage geometry. We denote $C_i(p_i, \tau)$ as the coverage region (which is an interval) and assume symmetric coverage to the ground. Then, we have
\begin{equation*}
    C_i := C_i(p_i, \tau) = C_i^+ \cup C_i^- \cup \{ \phi_i(p_i, \tau) \},
\end{equation*}
where
\begin{equation}
    \label{eq:ci}
    \begin{split}
        C_i^+ &:= ( \phi_i(p_i, \tau), \phi_i(p_i,\tau) + \alpha_i ), \\ 
        C_i^- &:= ( \phi_i(p_i, \tau) - \alpha_i, \phi_i(p_i,\tau) ).
    \end{split}
\end{equation}
Since a configuration is an end-to-end formation, we define $C_i$ within the interval $[\omega \tau, \omega \tau + 2\pi)$ at time $\tau$ for mathematical characterization. It may create discontinuities for $C_i$. For example, $C_1(0,0) = [0, \alpha_1) \cup (2\pi-\alpha_1, 2\pi)$. However, it can be circumvented by shifting the discontinued region left or right by $2\pi$.

Note that $C_i$ may overlap with $C_j$ ($i,j \in \mN, j\neq i$) to ensure the full coverage of the ground. Due to the ring structure of the configuration, we assume that satellite $i$ can only overlap with its adjacent neighbors, i.e., $\mN_i = \{ i-1, i+1\}$\footnote{For clarity, the indices are cyclic with a modulus $n$.}. Non-overlap scenarios yield $\mN_i = \varnothing$. 

\begin{remark}
The assumption of overlapping with adjacent neighbors can be justified by economic and practical reasons. Deploying multiple satellites to cover the same area in the same orbit can be costly and inefficient. The satellite control and collision avoidance also become more challenging. 
\end{remark}

A satellite may provide different coverage intensities in $C_i$. We adopt the linear coverage intensity function. Let $k_i := \psi^\mm_i / \alpha_i$, we define the local coverage intensity $\psi_i: \R^2 \times \R \times \R \to \R$ by
\begin{equation}
    \small
    \label{eq:psi_i}
    \begin{split}
        & \psi_i(p_i, \theta, \tau) = \\
        &\quad \begin{cases} 
        -k_i(\theta - \omega \tau - \Delta \phi_i(p_i) - \phi^0_i) + \psi^\mm_i & \theta \in C_i^+ \cup \{\phi_i\} \\ k_i(\theta - \omega \tau - \Delta \phi_i(p_i) - \phi^0_i) + \psi^\mm_i  & \theta \in C_i^- \\
        0 & \text{o.w.}
    \end{cases}
    \end{split}
\normalsize
\end{equation}
When $C_i$ is not continuous in $[\omega \tau, \omega \tau+2\pi)$, $\theta$ in \eqref{eq:psi_i} needs to shift left or right by $2\pi$, depending on the position of satellite $i$. We omit the special definitions of $\psi(\theta, p_i, \tau)$ that need shift for simplicity.

In the configuration, satellite $i$ can affect the coverage intensity in region $\tilde{C}_i := C_{i-1}^+ \cup C_i \cup C_{i+1}^-$ because of potential overlaps with adjacent neighbors. We define the composite coverage intensity $\beta_i: \R^6 \times \R \times \R \to \R$ for satellite $i$ in $\tilde{C}_i$ as $\beta_i( \{p_i, p_{i-1}, p_{i+1}\}, \theta, \tau) = \sum_{i \in \mN_i} \psi_i(p_i, \theta, \tau)$. Although $\beta_i$ only relates to $p_{i-1}$ and $p_{i+1}$, we can extend its arguments by writing $\beta_i(\{p_i, p_{-i}\}, \theta, \tau)=: \beta_i(\bp, \theta, \tau) $ for simplicity of notations.
Then, we define the average coverage cost $u_i: \R^2 \times \R^{2n-2} \to \R$ for satellite $i$ as 
\begin{equation}
\label{eq:ui}
    u_i(p_i, p_{-i}) = 
    \frac{1}{2 T_s} \int_0^{T_s} \int_{\tilde{C}_i} \norm{\beta_i(\bp,\theta,\tau) - \mu(\theta)}^2_2 \dd\theta \dd\tau.
\end{equation}
The cost \eqref{eq:ui} measures the average coverage performance of satellite $i$ over a single period. A smaller cost indicates a better coverage performance. Also, satellite $i$ should decide its new position  $p_i$ from the feasible set $\Omega_i = \{ p_i \in \R^2: \norm{p_i} \leq p^{\mm}, (p_{ix}+r_s)^2 + p_{iy}^2 = r_s^2 \}$, i.e. $p_i \in \Omega_i$, which indicates that the satellite should remain in the same orbit and the relative motion should not be far from its chief position.

\begin{figure}
    \centering
    \includegraphics[height=2.4cm]{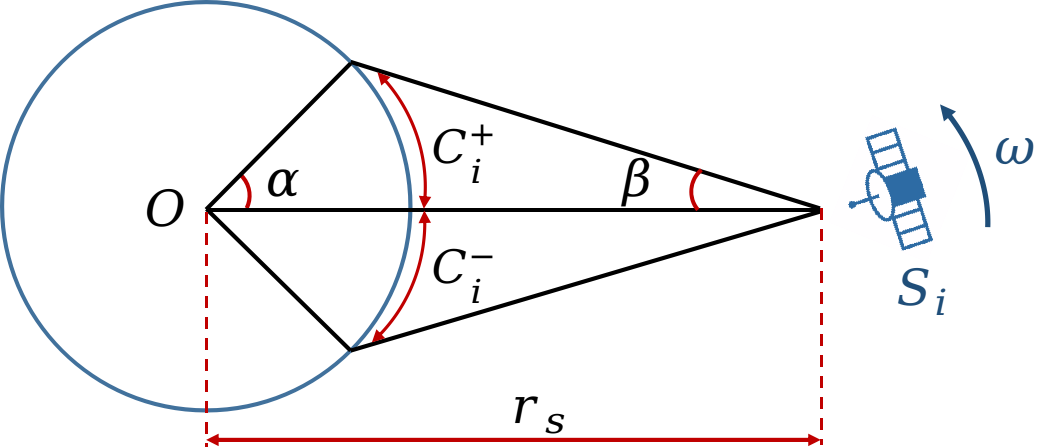}
    \caption{Coverage geometry of satellite $i$. Here, $\alpha$ and $\beta$ represent the coverage angle and FOV angle, respectively; $r_s$ is the orbital radius; $C_i^+$ and $C_i^-$ are two half-coverage regions.}
    \label{fig:coverage_geometr}
\end{figure}

\subsection{Two-Stage Problem Formulation}
When unexpected adversarial or non-adversarial attacks occur, some satellites' coverage capability can be affected. The current configuration may no longer provide the optimal coverage service. To cope with coverage performance degeneration, we formulate a two-stage problem---the planning and control stages---as a resilient and distributed architecture to (a) improve the resilience of the satellite configuration to adapt to space threats and (b) optimize the fuel consumption (the control effort) for maneuver by considering limited thrusts and fuel supply.

The planning stage problem seeks a new configuration when satellites encounter incidents and fail to provide the optimal coverage with the current configuration. More specifically, every satellite minimizes its cost objective defined in \eqref{eq:ui} and hence form a non-cooperative game called the \emph{coverage game} $\mG$, which can be written in a strategic form $\mG = \langle n, (\Omega_i)_{i=1}^n, (u_i)_{i=1}^n \rangle$. The equilibrium solution of the coverage game $\mG$ will be used as the new configuration to adapt to the coverage performance degradation. 

Let $\bp^\dd$ be the new configuration generated by the coverage game $\mG$. The control stage problem steers all satellites to $\bp^\dd$ by minimizing fuel consumptions. Let $p^\dd_i$ be the $i$-th component of $\bp^\dd$. Due to independent dynamics \eqref{eq:control}, each satellite can autonomously drive to the target position by using its own fuel-optimal controls. For satellite $i$, the control stage problem can be formulated as a finite-time optimal control problem:
\begin{equation}
    \label{eq:Qci}
    \tag{$\mQ_{ci}$}
    \begin{split}
        \min_{u_i} \quad & \frac{1}{2} \int_{t=0}^{T_f} \left( \norm{u_i}_{R_i}^2 + \norm{v_i}_{Q_i}^2 \right) \dd t \\
        \text{s.t.} \quad & \begin{bmatrix} \dot{p}_i \\ \dot{v}_i \end{bmatrix} = A \begin{bmatrix} p_i \\ v_i \end{bmatrix} + B u_i, \quad \norm{u_i(t)}_2 \leq u_i^\mm, \\
        & p_{i}(T_f) = p_i^\dd, \quad v_i(T_f) = 0.
    \end{split}
\end{equation}
The terminal constraints require that all satellites indeed form the desired stable configuration after the control.

\section{Distributed Coverage Planning and Analysis} \label{sec:coverage_planning}
In this section, we discuss the coverage game as an approach to distributed and resilient coverage planning. We first study the Nash equilibrium of the coverage game and then introduce an agent-based (also distributed) algorithm for the coverage planning. We also analyze the property of the equilibrium solution generated by our algorithm.   

\subsection{Nash Equilibrium of Coverage Game}
In the coverage game $\mG = \langle n, (\Omega_i)_{i=1}^n, (u_i)_{i=1}^n \rangle$, $n$ satellites are the players and $p_i$ is called a pure strategy of satellite $i$. The feasible set $\Omega_i$ is also called the action space. We write $\bOmega = \prod_{i=1}^n \Omega_i$ to denote the set of all strategies and $\Omega_{-i} = \prod_{j=1,j\neq i}^n \Omega_j$.
We use Nash equilibrium (NE) as the solution concept to study the coverage game $\mG$. 

\begin{definition}
\label{def:ne}
A strategy $\bp \in \bOmega$ is a Nash equilibrium if
\begin{equation*}
    u_i(p_i, p_{-i}) \leq u_i(p_i', p_{-i}), \quad \forall p_i' \in \Omega_i, \ \forall i \in \mN.
\end{equation*}
A strategy $\bp \in \prod_{i=1}^n \mB_i$ is a local Nash equilibrium \cite{ratliff2016characterization} if there exists open sets $\mB_i \subset \Omega_i$ containing $p_i$ for every $i \in \mN$, such that 
\begin{equation*}
    u_i(p_i, p_{-i}) \leq u_i(p_i', p_{-i}), \quad \forall p_i' \in \mB_i, \ \forall i \in \mN.
\end{equation*}
\end{definition}

To study the NE of the coverage game $\mG$, we first note that $\mG$ is closely related to a special class of games called \emph{potential games} \cite{monderer1996potential}. We provide the definition of the potential game.

\begin{definition}
\label{def:potential}
A game $\langle n, (\Omega_i)_{i=1}^n, (u_i)_{i=1}^n \rangle$ is called a potential game if there exists a potential function $J(\bp)$ such that for every $i \in \mN$ and every $p_{-i} \in \Omega_{-i}$, 
\begin{equation*}
    u_i(p_i'', p_{-i}) - u_i(p_i', p_{-i}) = J( \{p_i'', p_{-i}\} ) - J( \{p_i', p_{-i}\} )
\end{equation*}
for all $p_i', p_i'' \in \Omega_i$.
\end{definition}

Then we have the following proposition to characterize $\mG$.
\begin{proposition}
\label{prop:1}
The coverage game $\mG$ is a potential game.
\end{proposition}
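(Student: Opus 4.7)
The plan is to construct an explicit potential function by summing the pointwise squared mismatch between the total coverage intensity and the demand intensity over the entire orbit and one period, and then to show via the adjacency (non-overlap-beyond-neighbors) assumption that unilateral deviations by satellite $i$ change this global quantity by exactly the same amount as they change $u_i$.

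Concretely, I would define
\begin{equation*}
J(\bp) \;=\; \frac{1}{2T_s}\int_0^{T_s}\!\int_0^{2\pi} \bigl(\Psi(\bp,\theta,\tau)-\mu(\theta)\bigr)^2\,\dd\theta\,\dd\tau,
\end{equation*}
where $\Psi(\bp,\theta,\tau)=\sum_{j=1}^{n}\psi_j(p_j,\theta,\tau)$ is the aggregate coverage intensity produced by the whole constellation. This is a natural global ``miscoverage energy'' and depends symmetrically on all players' strategies, which is the hallmark candidate for a potential in welfare-type games.

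Next I would fix an index $i$ and two strategies $p_i',p_i''\in\Omega_i$, with $p_{-i}$ fixed, and analyze the difference $J(\{p_i'',p_{-i}\})-J(\{p_i',p_{-i}\})$. The key observation is that $\psi_i(p_i,\cdot,\tau)$ is supported in $C_i(p_i,\tau)$, so for any $\theta$ lying outside $C_i(p_i',\tau)\cup C_i(p_i'',\tau)$ one has $\psi_i(p_i',\theta,\tau)=\psi_i(p_i'',\theta,\tau)=0$ and hence $\Psi$ is unchanged. Thus the $\theta$-integral reduces to an integral over $C_i(p_i',\tau)\cup C_i(p_i'',\tau)\subset \tilde C_i$. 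On this set, the adjacency assumption $\mN_i=\{i-1,i+1\}$ guarantees that the only terms of $\Psi$ that can be nonzero are $\psi_{i-1},\psi_i,\psi_{i+1}$; therefore $\Psi(\bp,\theta,\tau)=\beta_i(\bp,\theta,\tau)$ for every $\theta$ in this region, under both strategies $p_i'$ and $p_i''$. Expanding the square and cancelling,
\begin{equation*}
J(\{p_i'',p_{-i}\})-J(\{p_i',p_{-i}\}) \;=\; \frac{1}{2T_s}\int_0^{T_s}\!\int_{\tilde C_i}\bigl[(\beta_i''-\mu)^2-(\beta_i'-\mu)^2\bigr]\,\dd\theta\,\dd\tau,
\end{equation*}
which by definition \eqref{eq:ui} is exactly $u_i(p_i'',p_{-i})-u_i(p_i',p_{-i})$. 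Invoking Definition~\ref{def:potential} with the potential $J$ concludes the proof.

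The step I expect to be most delicate is justifying the reduction $\Psi=\beta_i$ on $C_i(p_i',\tau)\cup C_i(p_i'',\tau)$ under arbitrary admissible deviations: as $p_i$ varies over $\Omega_i$, the coverage arc $C_i(p_i,\tau)$ translates, and one must argue that the adjacency assumption $\mN_i=\{i-1,i+1\}$ continues to hold so that no non-neighboring $\psi_j$ leaks into the integration domain. This is implicit in the problem setup, and with the neighborhood bound $\norm{p_i}_2\le p^\mm$ it follows geometrically because the displacements are small compared to the arc spacing between non-adjacent satellites; I would state this explicitly as the working hypothesis on which the identification $\Psi=\beta_i$ inside $\tilde C_i$ rests.
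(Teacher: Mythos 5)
Your proposal is correct and follows essentially the same route as the paper: define the global miscoverage energy $J$ over a full $2\pi$ window as the potential, cancel the contribution outside the region influenced by $p_i$, and use the adjacency assumption to identify the aggregate intensity with $\beta_i$ there so that differences of $J$ equal differences of $u_i$. Your closing remark on why no non-neighboring $\psi_j$ leaks into the integration domain is a point the paper simply assumes, so nothing is missing.
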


\begin{proof}
We first define the global coverage intensity $\rho: \R^{2n} \times \R \times \R \to \R$ in $[\omega \tau, \omega \tau + 2\pi)$ by $\rho(\bp, \theta, \tau) = \sum_{i=1}^n \psi_i(p_i, \theta, \tau)$.
Then we define the function $J: \R^{2n} \to \R$ by
\begin{equation}
\label{eq:cost}
    J(\bp) = \frac{1}{2T_s} \int_0^{T_s} \int_{\omega \tau}^{\omega \tau + 2\pi} \norm{\rho(\bp, \theta, \tau) - \mu(\theta)}^2_2 \dd\theta \dd\tau.    
\end{equation}

For simplicity, we write $S := [\omega \tau, \omega \tau + 2\pi)$. For any $i \in \mN$, let $p_i', p_i'' \in \Omega_i$. We have 
\begin{align*}
    & J(\{p_i', p_{-i}\}) - J(\{p_i'', p_{-i}\}) \\ 
    =\ & \frac{1}{2T_s} \int_0^{T_s} \int_{\tilde{C}_i \cup (S\backslash\tilde{C}_i)} \left( \norm{\rho(\{p_i', p_{-i}\}, \theta, \tau) - \mu(\theta)}^2_2 \right. \\ 
    & \hspace{3em} \left. - \norm{\rho(\{p_i'', p_{-i}\},\theta, \tau) - \mu(\theta)}^2_2 \right) \dd \theta \dd \tau.
\end{align*}
We split the inner integration into two halves. From the assumption in Sec.~\ref{sec:model}, $p_i$ has no impact on the coverage intensity defined outside $\tilde{C}_i$. i.e., $\rho(\{p_i', p_{-i}\}, \theta, \tau) = \rho(\{p_i'', p_{-i}\}, \theta, \tau)$ for $\theta \in (S\backslash \tilde{C}_i)$. Therefore, we can cancel the integral in $(S\backslash \tilde{C}_i)$ and leave the integral in $\tilde{C}_i$. 
From the definitions of $\rho$ and $\beta_i$, we can check that $\rho(\{p_i, p_{-i}\}, \theta, \tau)$ is equivalent to $\beta_i(\{p_i, p_{-i}\}, \theta, \tau)$ in $\tilde{C}_i$, which indicates
\begin{gather*}
    \int_0^{T_s} \int_{\tilde{C}_i} \norm{\rho(\{p_i', p_{-i}\}, \theta, \tau) - \mu(\theta)}^2_2 \dd\theta \dd\tau \\ 
    = \int_0^{T_s} \int_{\tilde{C}_i} \norm{\beta_i(\{p_i', p_{-i}\}, \theta, \tau) - \mu(\theta)}^2_2 \dd\theta \dd\tau 
\end{gather*}
for all $p_i' \in \Omega_i$. Therefore, we can obtain
\begin{equation*}
    u_i(p_i'', p_{-i}) - u_i(p_i', p_{-i}) = J(\{p_i'', p_{-i}\}) - J(\{p_i', p_{-i}\}),
\end{equation*}
for all $i \in \mN$, which completes the proof.
\end{proof}

\subsubsection{Interpretation of Potential Function}
Unlike other potential games, the potential function $J$ of the coverage game $\mG$ has a clear interpretation. It measures the coverage performance of the entire satellite configuration, and we refer to it as the accumulated average coverage cost. It can also be used as an index to check the quality of the local NE. As the definition \eqref{eq:cost} indicates, a smaller $J$ indicates a better coverage performance.

\subsection{Existence of NE}
From Def.~\ref{def:ne}, an NE $\bp$ of $\mG$ indicates $u_i(p_i, p_{-i}) - u_i(p_i', p_{-i}) \leq 0$ for all $p_i' \in \Omega_i$, $i\in \mN$. Since $\mG$ is a potential game, we can use the potential function to substitute the satellite cost function and obtain
\begin{equation*}
    J(\{p_i, p_{-i}\}) - J(\{p_i', p_{-i}\}) \leq 0, \quad \forall p_i' \in \Omega_i, \ \forall i \in \mN.
\end{equation*}
This shows that the minimizer of $J(\bp)$ is also an NE. Therefore, the argmin set of the potential function $J(\bp)$ is a subset of NE of the coverage game $\mG$. We can use this property to characterize the existence of the NE and arrive at the following proposition. 

\begin{proposition}
\label{prop:ne_exist}
The NE of the coverage game $\mG$ exists.
\end{proposition}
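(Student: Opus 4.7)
The plan is to exploit the potential game structure established in Proposition~\ref{prop:1}. As noted in the paragraph preceding the statement, any minimizer of the potential function $J(\bp)$ over $\bOmega$ is automatically a Nash equilibrium of $\mG$. Hence it suffices to show that $\argmin_{\bp \in \bOmega} J(\bp)$ is nonempty, which I would establish via a standard Weierstrass-type argument: a continuous real-valued function on a nonempty compact set attains its minimum.

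First I would verify compactness of $\bOmega$. Each individual action set $\Omega_i = \{p_i \in \R^2 : \norm{p_i} \leq p^\mm,\ (p_{ix}+r_s)^2 + p_{iy}^2 = r_s^2\}$ is the intersection of a closed disk and a circle in $\R^2$, hence closed and bounded, therefore compact. The product $\bOmega = \prod_{i=1}^n \Omega_i$ is then compact by Tychonoff (or simply the finite product of compact subsets of Euclidean space). One also needs $\bOmega$ to be nonempty, which is immediate since $\bp^{\mathrm{IC}} = 0_{2n \times 1}$ lies in $\bOmega$ provided $p^\mm \geq 0$.

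Next I would verify continuity of $J$ on $\bOmega$. From \eqref{eq:Dphi} the deviation angle $\Delta\phi_i(p_i)$ is continuous in $p_i$ away from the singular point $p_i = (-r_s, 0)$, and the orbit constraint in $\Omega_i$ together with $p^\mm < 2r_s$ keeps satellites away from that point. Consequently $\phi_i(p_i,\tau)$ is continuous in $p_i$, and the coverage intervals $C_i^{\pm}$ shift continuously with $p_i$. The local coverage intensity $\psi_i(p_i,\theta,\tau)$ defined in \eqref{eq:psi_i} is piecewise linear in $\theta$ with breakpoints that depend continuously on $p_i$; hence for every fixed $(\theta,\tau)$ (outside a set of $\theta$-measure zero) the integrand $\norm{\rho(\bp,\theta,\tau)-\mu(\theta)}_2^2$ is continuous in $\bp$, and it is uniformly bounded on $\bOmega \times [0,2\pi] \times [0,T_s]$ because $\psi_i \in [0,\psi_i^\mm]$ and $\mu$ is bounded on the compact period. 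Dominated convergence then yields continuity of $J$ on $\bOmega$.

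Putting the two ingredients together, the Weierstrass extreme value theorem guarantees that $J$ attains its minimum on $\bOmega$, so $\argmin_{\bp \in \bOmega} J(\bp) \neq \varnothing$, and any element of this set is an NE of $\mG$ by the reasoning in the proposition's preamble. The main obstacle I anticipate is the continuity step: the integrand has moving discontinuities in $\theta$ (the endpoints of $C_i^\pm$ drift with $\bp$), and one must argue carefully that these moving jump sets do not destroy continuity of the integral, which is precisely where the uniform boundedness plus dominated convergence (or an explicit Lipschitz estimate on how $C_i$ deforms with $p_i$) is needed. Compactness of $\bOmega$ and nonemptiness are essentially bookkeeping.
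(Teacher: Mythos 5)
Your proposal is correct and follows essentially the same route as the paper: exploit the potential-game property, note that $\bOmega$ is compact and $J$ is continuous, and invoke the extreme value theorem to conclude that $\argmin_{\bp\in\bOmega} J(\bp)$ is nonempty. The only difference is that you spell out the continuity of $J$ (handling the moving endpoints of $C_i^\pm$ via boundedness and dominated convergence), which the paper simply asserts with ``we can check.''
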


\begin{proof}
It suffices to show the set $\argmin_{\bp \in \bOmega} J(\bp)$ exists. It is clear that $\Omega_i$ is a compact set and so is $\bOmega$. From the definition of the potential function in \eqref{eq:cost}, we can check that $J(\bp)$ is continuous in $\R^{2n}$. From the Extreme Value Theorem, there exists $\bp^* \in \bOmega$ such that $J(\bp^*) = \inf\{ J(\bp), \bp \in \bOmega \}$. This shows the existence of NE of $\mG$. 
\end{proof}

\begin{remark}
In the following sections, we suppress function arguments for simplicity. For example, $\rho$ stands for $\rho(\bp, \theta, \tau)$. 
We only write the argument when it is emphasized. 
\end{remark}

\subsection{Agent-Based Algorithm for NE Computation}
Potential games provide a way to compute the NE of the coverage $\mG$ by solving the following optimization problem:
\begin{equation}
    \label{eq:Qp}
    \tag{$\mQ_p$}
    \min_{\bp \in \bOmega} \quad J(\bp).
\end{equation}
Due to the nonconvexity of $\bOmega$, we look for the local minimum of $J$, which is also the local NE of the coverage game $\mG$. The projected gradient descent method can be used to search for the local minimum. 
In the following, we will show that the gradient method can be decentralized so that we can devise agent-based (also distributed) algorithms to compute the local minimum of $J$.

\subsubsection{Distributed Structure of Coverage Measure}
Despite the coupling of $\bp$ in $J$, for satellite $i$, from \eqref{eq:psi_i} we have 
\begin{equation}
\small
    \label{eq:dpsi_dpi}
    \frac{\partial \psi_i}{\partial p_i} = \begin{cases} 
    k_i \left[\frac{-p_{iy}}{(p_{ix}+r_s)^2 + p_{iy}^2} \ \frac{p_{ix}+r_s}{(p_{ix}+r_s)^2 + p_{iy}^2} \right] & \theta \in C_i^+ \cup \{\phi_i\} \\ 
    -k_i \left[\frac{-p_{iy}}{(p_{ix}+r_s)^2 + p_{iy}^2} \ \frac{p_{ix}+r_s}{(p_{ix}+r_s)^2 + p_{iy}^2} \right] & \theta \in C_i^- \\ 
    0 & \text{o.w.}
    \end{cases}
\normalsize
\end{equation}
Following the definition of $\rho$, we have
\begin{equation}
\small
    \label{eq:dJ_dpi_tmp}
    \begin{split}
        &\frac{\partial J}{\partial p_i} = \frac{1}{2 T_s} \int_{\tau = 0}^{T_s} \int_{\theta = \omega \tau}^{\omega \tau+2\pi} 2 (\rho - \mu) \frac{\partial \rho}{\partial p_i} \dd \theta \dd \tau \\ 
        &= \frac{1}{2 T_s} \int_{\tau = 0}^{T_s} \int_{C_i} 2 (\rho - \mu) \frac{\partial \psi_i}{\partial p_i} \dd \theta \dd \tau \\
        &= \frac{1}{T_s} \int_{\tau = 0}^{T_s} \int_{C_i} \rho \frac{\partial \psi_i}{\partial p_i} \dd \theta \dd \tau - \frac{1}{T_s} \int_{\tau = 0}^{T_s} \int_{C_i} \mu \frac{\partial \psi_i}{\partial p_i} \dd \theta \dd \tau.
    \end{split}
\normalsize
\end{equation}
With the assumption of the time-invariant demand $\mu$, we can further simplify \eqref{eq:dJ_dpi_tmp} with the following lemma. 

\begin{lemma}\label{lemma:1}
Let $f:\R \to \R$ be a function with $f(x) \geq 0$ and $f(x) = f(x + T)$. For any $\delta \in [0, T)$, the function $G(x) = F(x + \delta) - F(x)$ is periodic with period $T$, where $F(x) = \int f(x) \dd x$. Hence $\int_0^T G(x) \dd x = \int_0^T G(x+\epsilon) \dd x$ for any $\epsilon \in \R$.
\end{lemma}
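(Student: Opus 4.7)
The plan is to prove the lemma in two short steps. First I would establish the periodicity of $G$ directly from the periodicity of $f$; then the second assertion about integrals follows from a standard fact about periodic functions.

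\textbf{Step 1 (Periodicity of $G$).} The key observation is that $F(x+T)-F(x)=\int_{x}^{x+T} f(s)\,\dd s$, and by the periodicity of $f$ this integral is independent of $x$: a standard change-of-variable argument gives $\int_{x}^{x+T}f(s)\,\dd s=\int_{0}^{T}f(s)\,\dd s=:C$. Hence $F(x+T)=F(x)+C$ for every $x\in\R$. I would then compute
\begin{equation*}
G(x+T)=F(x+T+\delta)-F(x+T)=\bigl(F(x+\delta)+C\bigr)-\bigl(F(x)+C\bigr)=G(x),
\end{equation*}
so $G$ has period $T$. Notice that the nonnegativity hypothesis $f\ge 0$ plays no role here; I would mention this in a brief remark, since the hypothesis presumably reflects the physical meaning of $f$ (an intensity) in the intended application to $\partial J/\partial p_i$.

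\textbf{Step 2 (Shift-invariance of the integral).} With $G$ now known to be $T$-periodic, I invoke the standard fact that for any $T$-periodic function $H$, $\int_{a}^{a+T}H(x)\,\dd x$ is independent of $a$. Applying this after the substitution $y=x+\epsilon$, we get
\begin{equation*}
\int_{0}^{T} G(x+\epsilon)\,\dd x=\int_{\epsilon}^{\epsilon+T}G(y)\,\dd y=\int_{0}^{T}G(y)\,\dd y,
\end{equation*}
which is the claim. For completeness I would sketch why the underlined standard fact holds: split $\int_\epsilon^{\epsilon+T}=\int_\epsilon^T+\int_T^{\epsilon+T}$ (assuming $\epsilon\in[0,T)$; the general case reduces to this by translating by a multiple of $T$), and use $G(y)=G(y-T)$ on the second piece to rewrite it as $\int_0^\epsilon G(y)\,\dd y$.

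There is no real obstacle here; the entire content of the lemma reduces to the two-line identity $F(x+T)-F(x)=\mathrm{const}$, and the rest is bookkeeping. The only small point to watch is that $F$ is only defined up to an additive constant (since it is written as an indefinite integral), but both $G$ and the assertions about it are invariant under adding a constant to $F$, so this ambiguity is harmless.
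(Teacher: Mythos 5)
Your proof is correct and follows essentially the same route as the paper's: the paper also writes $G(x)=\int_x^{x+\delta}f(\tau)\,\dd\tau$ and shifts the integration variable by $T$ to get periodicity (your $F(x+T)=F(x)+C$ reformulation is the same computation), and it proves the shift-invariance by exactly the splitting $\int_\epsilon^{\epsilon+T}=\int_0^T+\int_T^{T+\epsilon}-\int_0^\epsilon$ that you sketch. Your side remark that the hypothesis $f\ge 0$ is not actually used is also accurate.
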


\begin{proof}
See Appendix \ref{app:1}.
\end{proof}

Using Lemma \ref{lemma:1}, we arrive at the following proposition.
\begin{proposition}
\label{prop:int_simplification}
The integral $\frac{1}{T_s} \int_0^{T_s} \int_{C_i} \mu \frac{\partial \psi_i}{\partial p_i} \dd \theta \dd \tau$ in \eqref{eq:dJ_dpi_tmp} is equal to $0$ for all satellite $i$, $i\in \mN$.
\end{proposition}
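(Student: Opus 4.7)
The plan is to exploit the explicit formula \eqref{eq:dpsi_dpi} to pull $\partial\psi_i/\partial p_i$ out of the inner $\theta$-integral, reducing what remains to a difference of two shifts of a $2\pi$-periodic function, which then cancels under the outer $\tau$-integral thanks to Lemma~\ref{lemma:1}.

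First I would observe that $\partial\psi_i/\partial p_i$ does not depend on $\theta$ or $\tau$: on $C_i^+\cup\{\phi_i\}$ it equals a row vector that depends only on $p_i$, call it $\mathbf{c}_i(p_i)$, and on $C_i^-$ it equals $-\mathbf{c}_i(p_i)$. Therefore, for fixed $\tau$,
\begin{equation*}
\int_{C_i}\mu(\theta)\,\frac{\partial\psi_i}{\partial p_i}\,\dd\theta \;=\; \mathbf{c}_i(p_i)\!\left(\int_{C_i^+}\mu(\theta)\,\dd\theta - \int_{C_i^-}\mu(\theta)\,\dd\theta\right).
\end{equation*}
Letting $F$ be any antiderivative of $\mu$ and writing $\phi_i = \phi_i(p_i,\tau)$ for brevity, the bracketed expression equals
\begin{equation*}
\bigl[F(\phi_i+\alpha_i) - F(\phi_i)\bigr] - \bigl[F(\phi_i) - F(\phi_i-\alpha_i)\bigr] = G(\phi_i) - G(\phi_i - \alpha_i),
\end{equation*}
where $G(x) := F(x+\alpha_i) - F(x)$.

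Next I would invoke Lemma~\ref{lemma:1} with $f=\mu \geq 0$, period $T=2\pi$, and shift $\delta = \alpha_i \in [0,2\pi)$, which tells us both that $G$ is $2\pi$-periodic and that $\int_0^{2\pi}G(x)\,\dd x = \int_0^{2\pi}G(x+\epsilon)\,\dd x$ for every $\epsilon \in \R$. Because $\phi_i(p_i,\tau) = \phi_i^0 + \Delta\phi_i(p_i) + \omega\tau$, the change of variable $u = \omega\tau$ (so $T_s \dd\tau = 2\pi\,\dd u/\omega \cdot (\omega/2\pi)$, hence $\dd\tau/T_s = \dd u/(2\pi)$) converts the outer integral into
\begin{equation*}
\frac{1}{2\pi}\int_0^{2\pi}\!\!\bigl[G(\phi_i^0+\Delta\phi_i+u) - G(\phi_i^0+\Delta\phi_i-\alpha_i+u)\bigr]\dd u.
\end{equation*}
The shift-invariance part of Lemma~\ref{lemma:1} makes each of the two terms equal $\frac{1}{2\pi}\int_0^{2\pi}G(u)\,\dd u$, so the difference, and hence the full double integral, vanishes.

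The only delicate point I anticipate is the wrap-around case when $C_i$ straddles the endpoint of the window $[\omega\tau,\omega\tau+2\pi)$; I would handle it by invoking the $2\pi$-periodicity of $\mu$ itself, so that $\int_{C_i^\pm}\mu\,\dd\theta$ is unchanged if $C_i^\pm$ is represented in a shifted window, and the identity $\int_{C_i^+}\mu\,\dd\theta - \int_{C_i^-}\mu\,\dd\theta = G(\phi_i) - G(\phi_i - \alpha_i)$ continues to hold verbatim. Apart from that bookkeeping, the proof is a one-line application of Lemma~\ref{lemma:1}.
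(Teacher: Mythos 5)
Your proof is correct and follows essentially the same route as the paper's: split the inner integral over $C_i^+$ and $C_i^-$, pull out the piecewise-constant factor $\pm k_i\,\partial\Delta\phi_i/\partial p_i$, reduce the remainder to a difference of two shifts of the $2\pi$-periodic function $G$, and cancel them via the shift-invariance in Lemma~\ref{lemma:1}. The only (cosmetic) difference is that you parametrize $G$ by the angle and change variables $u=\omega\tau$, whereas the paper treats $G$ directly as a $T_s$-periodic function of $\tau$; your explicit treatment of the wrap-around window is a welcome extra detail.
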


\begin{proof}
We divide the integral into two halves:
\begin{equation}
\label{eq:prpo3.1}
\small
    \frac{1}{T_s} \int_0^{T_s} \int_{C_i^+} \mu \frac{\partial \psi_i}{\partial p_i} \dd\theta \dd\tau + \frac{1}{T_s} \int_0^{T_s} \int_{C_i^-} \mu \frac{\partial \psi_i}{\partial p_i} \dd\theta \dd\tau.
\normalsize
\end{equation}
Let $M(\theta) = \int \mu(\theta) \dd\theta$. For any fixed $\tau$, we have
\begin{equation*}
\small
    \begin{split}
        \int_{C_i^+} \mu \frac{\partial \psi_i}{\partial p_i} \dd\theta &= k_i \frac{\partial \Delta \phi_i}{\partial p_i} \left( M(\phi_i + \alpha_i) - M(\phi_i) \right), \\
        \int_{C_i^-} \mu \frac{\partial \psi_i}{\partial p_i} \dd\theta &= -k_i \frac{\partial \Delta \phi_i}{\partial p_i} \left( M(\phi_i) - M(\phi_i - \alpha_i) \right).
    \end{split}
\normalsize
\end{equation*}

Note that $\mu(\theta) \geq 0$ and has a period $2\pi$, and it is clear that $\alpha_i < 2\pi$. Therefore, from Lemma.~\ref{lemma:1}, we know that $M(\phi_i + \alpha_i) - M(\phi_i)$ and $M(\phi_i) - M(\phi_i-\alpha_i)$ are both periodic with period $2\pi$. 
However, $\phi_i$ is a function of $\tau$ from \eqref{eq:phi_i}, and the variables $\theta, \tau$ are correlated in \eqref{eq:dJ_dpi_tmp} by $\theta = \omega \tau$. Therefore, $M(\phi_i + \alpha_i) - M(\phi_i)$ and $M(\phi_i) - M(\phi_i-\alpha_i)$ are in fact periodic in $\tau$ with period $T_s$. 
Let $G(\tau) = M(\phi_i + \alpha_i) - M(\phi_i)$, then $G(\tau)$ has the period $T_s$. The integral \eqref{eq:prpo3.1} becomes
\begin{equation*}
    \small
    \frac{1}{T_s} k_i \frac{\partial \Delta \phi_i}{\partial p_i} \left( \int_0^{T_s} G(\tau) \dd\tau - \int_0^{T_s} G(\tau-\alpha_i) \dd\tau \right) = 0,
\end{equation*}
\normalsize
which completes the proof.
\end{proof}

With Prop.~\ref{prop:int_simplification}, \eqref{eq:dJ_dpi_tmp} becomes 
\begin{equation}
    \label{eq:dJ_dpi}
    \frac{\partial J}{\partial p_i} = \frac{1}{T_s} \int_0^{T_s} \int_{C_i} \rho \frac{\partial \psi_i}{\partial p_i} \dd \theta \dd \tau. 
\end{equation}
From \eqref{eq:dJ_dpi} we observe that the satellite $i$ only needs to communicate with its adjacent neighbors to compute the necessary gradient information. Thus, we use distributed gradient descent (GD) methods for all satellites to jointly solve \eqref{eq:Qp}.

\subsubsection{Distributed Projected Gradient Descent Algorithm}
Due to the presence of $\Omega_i$, $i \in \mN$, we project the every iteration of the GD back to $\Omega_i$. The iteration follows
\begin{equation*}
\small
    p_i^{(k+1)} = \proj_{\Omega_i} \left( p_i^{(k)} - s^{(k)} \frac{\partial J^{(k)}}{\partial p_i} \right),
\normalsize
\end{equation*}
where the superscript $(k)$ denotes the $k$-th iteration\footnote{We write $\frac{\partial J(\bp^{(k)})}{\partial p_i}$ as $\frac{\partial J^{(k)}}{\partial p_i}$ and $\frac{\partial J(\bp^*)}{\partial p_i}$ as $\frac{\partial J^*}{\partial p_i}$ for simplicity.}, $s^{(k)}$ is the step size, and $\proj_{\Omega_i}(\cdot)$ is the projection operator. Since $\Omega_i$ are independent, we design the distributed projected gradient descent (DPGD) algorithm for coverage planning in Alg.\ref{alg:1}.

\begin{algorithm}
\KwInit: $p_i^{(0)} \gets 0, \exit_i \gets$ false, $\cnt_i \gets 0$\;
$k \gets 0$ ; \tcp{global clock}
\For{$k=1,2,\dots$}{
    listen($\exit_i, p_i^{(k)}$) ; \tcp{Execute when receiving query signals}
    \If{$\exit_i = \textnormal{false}$}{
        $\{ p_{-i}^{(k)}, \alpha_{-i}, \psi_{-i}^m \} \gets$ query() ; \tcp{query neighbors' info}
        
        Identify $C_i^{(k)}$ with $(\alpha_j, \psi_j^m, p_j^{(k)})$, $j \in \mN_i$ \;
        Compute $\frac{\partial J^{(k)}}{\partial p_i}$ with \eqref{eq:dJ_dpi} \;
        $p_i^{\proj} \gets \proj_{\Omega_i} \left( p_i^{(k)} - s^{(k)} \frac{\partial J^{(k)}}{\partial p_i} \right)$ \;
        \eIf{$\norm{\partial J^{(k)} / \partial p_i}_2 < \epsilon_i$ \KwOr $\norm{p_i^{\proj} - p_i^{(k)}}_2 < \epsilon_i$}{
            $\cnt_i \gets \cnt_i + 1$ \;
            $p_i^{(k+1)} \gets p_i^{(k)}$ \;
        }{
            $\cnt_i \gets 0$ \;
            $p_i^{(k+1)} \gets p_i^{\proj}$ \;
        }
        
        \If{$\cnt_i > \mathrm{max\_cnt}$ \KwOr $k > \mathrm{max\_k}$}{
            $\exit_i \gets$ true \;
            $p_i^\dd \gets p_i^{(k+1)}$ \;
        }
    }
    $k \gets k + 1$ \;
}
\KwFna{
    \eIf{$\exit_i = \textnormal{true}$}{
        broadcast $\{ p_i^\dd, \alpha_i, \psi^\mm_i \}$\;
    }{
        broadcast $\{ p_i^{(k)}, \alpha_i, \psi^\mm_i \}$ \;
    }
} 
\caption{DPGD algorithm for satellite $i$.}
\label{alg:1}
\end{algorithm}

\begin{remark}
In Alg.\ref{alg:1}, satellites perform computations at each global clock. The global clock only needs to be set once before the algorithm runs. Once the clock is set, each satellite communicates only with its adjacent neighbors to compute the new configuration. 
\end{remark}

\subsection{Convergence of DPGD Algorithm}
The following proposition guarantees the convergence of DPGD Alg.~\ref{alg:1} under mild conditions. 

\begin{proposition}
\label{prop:convergence}
The DPGD Alg.~\ref{alg:1} converges if all satellites adopt the same square-summable step size sequence $\{ s^{(k)}\}$, i.e., $\sum_{k=0}^\infty s^{(k)} = \infty$ and $\sum_{k=0}^\infty (s^{(k)})^2 < \infty$. Besides, the algorithm converges to the point $\bp^* = \{ p^*_1, \dots, p^*_n\}$ where either $p^*_i \in \bd \Omega_i$ or $\frac{\partial J^*}{\partial p_i} = 0$, $i \in \mN$.
\end{proposition}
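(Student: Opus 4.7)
The plan is to recognize Alg.~\ref{alg:1} as a parallel projected gradient descent applied to the potential function $J$ over the product set $\bOmega = \prod_i \Omega_i$, and then invoke a Robbins--Monro-type convergence argument for smooth (possibly nonconvex) minimization. Because the projection $\proj_{\bOmega}$ decouples across the product, and because Proposition~\ref{prop:1} together with \eqref{eq:dJ_dpi} guarantees that the local gradient $\partial J^{(k)}/\partial p_i$ computed by satellite $i$ is exactly the $i$-th block of $\nabla J(\bp^{(k)})$, the per-agent update on line 9 of Alg.~\ref{alg:1} coincides with the $i$-th block of the centralized iteration
\[ \bp^{(k+1)} = \proj_{\bOmega}\!\bigl(\bp^{(k)} - s^{(k)}\nabla J(\bp^{(k)})\bigr). \]
Thus the distributed scheme traces out a single standard projected-gradient trajectory on $J$, and $J$ itself serves as the Lyapunov function.

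I would then verify the smoothness hypotheses. The set $\bOmega$ is compact, so $J$ is bounded below and the iterates remain bounded. From \eqref{eq:dpsi_dpi} the denominator $(p_{ix}+r_s)^2+p_{iy}^2$ stays above $(r_s-p^\mm)^2>0$ on $\Omega_i$, making $\partial\psi_i/\partial p_i$ continuously differentiable with uniformly bounded derivatives on $\bOmega$; propagating this through \eqref{eq:dJ_dpi} yields a global Lipschitz constant $L$ for $\nabla J$. A standard descent lemma (valid once $s^{(k)}\leq 2/L$, which holds eventually since $\sum(s^{(k)})^2<\infty$ forces $s^{(k)}\to 0$) then gives, for the gradient-mapping proxy $G^{(k)}=(\bp^{(k)}-\bp^{(k+1)})/s^{(k)}$,
\[ J(\bp^{(k+1)}) \leq J(\bp^{(k)}) - s^{(k)}\bigl(1-\tfrac{L}{2}s^{(k)}\bigr)\norm{G^{(k)}}^2. \]
Telescoping, together with $\inf J > -\infty$ and $\sum(s^{(k)})^2<\infty$, yields $\sum_k s^{(k)}\norm{G^{(k)}}^2<\infty$; combined with $\sum_k s^{(k)}=\infty$ this forces $\liminf_k \norm{G^{(k)}}=0$. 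Compactness of $\bOmega$ and continuity of the projected-gradient mapping then upgrade the $\liminf$ to convergence of the whole sequence $\bp^{(k)}$ to some fixed point $\bp^*$ of that mapping.

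Finally, I would read off the block-wise characterization. At $\bp^*$ the fixed-point identity $p_i^* = \proj_{\Omega_i}\!\bigl(p_i^* - s\,\partial J^*/\partial p_i\bigr)$ for all sufficiently small $s>0$ forces either $\partial J^*/\partial p_i = 0$ (the unconstrained stationarity case) or the perturbed point $p_i^* - s\,\partial J^*/\partial p_i$ to leave $\interior\Omega_i$ so that projection pins $p_i^*$ to $\bd\Omega_i$. The main obstacle is the \emph{nonconvexity} of each $\Omega_i$ (a circular arc): the classical descent lemma is stated for convex feasible sets, so to obtain the inequality above rigorously I would restrict to step sizes small enough that $\proj_{\Omega_i}$ is single-valued in a neighborhood of $p_i^{(k)}$ (justified because $\Omega_i$ is a smooth compact one-dimensional submanifold of $\R^2$), and treat the arc as a Riemannian submanifold if needed. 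A secondary issue is reconciling the $\epsilon_i$-threshold termination in Alg.~\ref{alg:1} with the asymptotic statement: the $\cnt_i$ mechanism is best interpreted as a finite-time detector of an $\epsilon_i$-approximate fixed point, and the proposition is recovered in the limit $\epsilon_i\to 0$.
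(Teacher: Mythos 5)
Your proposal follows essentially the same route as the paper's proof: a descent-lemma bound using a Lipschitz constant for $\nabla J$, the actual-step proxy $(\bp^{(k)}-\bp^{(k+1)})/s^{(k)}$ in place of the raw gradient, a telescoping sum combined with $\sum_k s^{(k)}=\infty$ and $\sum_k (s^{(k)})^2<\infty$ to drive the step norms to zero, and the fixed-point identity of $\proj_{\Omega_i}$ to read off that each $p_i^*$ either lies on $\bd\Omega_i$ or has $\partial J^*/\partial p_i=0$. If anything you are slightly more careful than the paper about the nonconvexity of the arc $\Omega_i$ and about upgrading $\liminf\norm{G^{(k)}}=0$ to convergence, issues the paper glosses over by asserting the acute-angle property of the projected step directly.
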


\begin{proof}
From \eqref{eq:dJ_dpi} we observe that $\frac{\partial J}{\partial p_i}$ is bounded by some constant $L > 0$. Let $z \in \bOmega$ and $\xi \in \R$ and define $g: \R\to \R$ by $g(\xi) = J(\bp + \xi z)$. Using Tayler' theorem, we have
\begin{equation*}
\small
\begin{split}
    & J(\bp+z) - J(\bp) = g(1) - g(0) = \int_0^1 g'(\xi) \dd \xi \\ 
    = \ & \int_0^1 \frac{\partial J(\bp + \xi z)}{\partial \bp} z \dd \xi \\ 
    \leq \ & \int_0^1 \frac{J(\bp)}{\partial \bp} z \dd \xi + \abs{ \int_0^1 \left( \frac{\partial J(\bp + \xi z)}{\partial \bp} - \frac{J(\bp)}{\partial \bp} \right) z \dd \xi} \\ 
    \leq \ & \frac{J(\bp)}{\partial \bp} z + \int_0^1 \norm{z}_2 \norm{\frac{\partial J(\bp + \xi z)}{\partial \bp} - \frac{J(\bp)}{\partial \bp}}_2 \dd\xi \\ 
    \leq \ & \frac{J(\bp)}{\partial \bp} z + \frac{L}{2} \norm{z}^2_2.
\end{split}
\normalsize
\end{equation*}

For satellite $i$, given $p_i^{(k)}$, $p_i^{(k+1)}$ can be computed by $\frac{\partial J^{(k)}}{\partial p_i}$ and the projection. We denote $\Delta p_i^{(k)} = (p_i^{(k)} - p_i^{(k+1)}) / s^{(t)}$, which represents the actual negative descent direction. Since $\Omega_i$ represents a closed arc, $\Delta p_i^{(k)}$ always form an acute angle with $\frac{\partial J^{(k)}}{\partial p_i}$ when $\Delta p_i^{(k)} \neq 0$. 
Therefore, $\frac{\partial J^{(k)}}{\partial p_i} \Delta p_i^{(k)} \geq 0 $ always holds, and the equality is achieved when $\Delta p_i^{(k)} = 0$ or $\frac{\partial J^{(k)}}{\partial p_i} = 0$. Besides, we can also bound $\frac{\partial J^{(k)}}{\partial p_i} \Delta p_i^{(k)}$ with two positive numbers $0<m<M$ such that
\begin{equation*}
    m \norm{\Delta p_i^{(k)}}_2^2 \leq \frac{\partial J^{(k)}}{\partial p_i} \Delta p_i^{(k)} \leq M \norm{\Delta p_i^{(k)}}_2^2.    
\end{equation*}
Let $z_i = -s^{(k)} \Delta p_i^{(k)}$. Followed by the inequality, we have 
\begin{equation}
\label{eq:diff_J}
    J^{(k+1)} - J^{(k)} \leq \sum_{i=1}^n \left( -m s^{(k)} + \frac{L}{2} (s^{(k)})^2 \right) \norm{\Delta p_i^{(k)}}_2^2.
\end{equation}

For satellite $i$, when $\Delta p_i^{(k)} = 0$, there are two possibilities. If $\frac{\partial J^{(k)}}{\partial p_i} \neq 0$, then $p_i^{(k)} \in \bd \Omega_i$. The algorithm stops reducing $J$ and changing $p_i^{(k)}$. If $\frac{\partial J^{(k)}}{\partial p_i} = 0$, then $p_i^{(k)}$ is a stationary point. 
When $\Delta p_i^{(k)} \neq 0$, since $\{ s^{(k)} \}$ is a decreasing sequence, there exists $k > 0$ such that $\left( -m s^{(k)} + \frac{L}{2} (s^{(k)})^2 \right) > 0$ when $k > K$. So the algorithm constructs a decreasing sequence $\{ J^{(k)} \}$ and $J^{(k)}$ either converges to some finite value or $-\infty$. However, Prop.~\ref{prop:ne_exist} has shown that $J$ is bounded below. So $\lim_{k \to \infty} J^{(k)} > -\infty$. We sum all inequalities \eqref{eq:diff_J} when $k > K$ for all satellites and obtain 
\begin{equation*}
\begin{split}
    \sum_{i=1}^n \sum_{k=K}^\infty  m s^{(k)} \norm{\Delta p_i^{(k)}}_2^2 - \sum_{i=1}^n \sum_{k=K}^\infty \frac{L}{2} (s^{(k)})^2 \norm{\Delta p_i^{(k)}}_2^2 \\ 
    \leq J^{(K)} - \lim_{k\to\infty} J^{(k)}.
\end{split}
\end{equation*}
The right hand side is finite but the first summation in the left hand side can diverge because $\{s^{(k)}\}$ is square-summable. Therefore, we must have $\lim_{k\to\infty} \norm{\Delta p_i^{(k)}}_2 = 0$ for all $i \in \mN$. This shows that $J^{(k)}$ converges to some point $\bp^* = \{ p^*_1, \dots, p^*_n\}$ where either $p^*_i \in \bd \Omega_i$ or $\frac{\partial J^*}{\partial p_i} = 0$.
\end{proof}

\subsection{Stationary Point Analysis}
When the convergence is guaranteed, the DPGD Alg.~\ref{alg:1} generates some limiting point $\bp^*$ of \eqref{eq:Qp}. To verify whether it is a local NE of the coverage game $\mG$, we need to check whether $\bp^*$ is a real local minimum of the potential function $J$. 
Prop.~\ref{prop:convergence} has shown that some components of $\bp^*$ may belong to $\bd \bOmega$ while others have a vanishing gradient. In the following, we will first show a special case where $\bp^* \in \interior \bOmega$ (i.e., all satellites have a vanishing gradient) generates a local minimum. Then, we show that the general case, where some components of $\bp^*$ belong to $\bd \bOmega$, also outputs a local minimum.

When $\bp^* \in \interior \bOmega$, from \eqref{eq:dJ_dpi} we have
\begin{equation}
    \label{eq:stationary}
    \int_0^{T_s} \int_{C_i^+} \rho \dd \theta \dd \tau = \int_0^{T_s} \int_{C_i^-} \rho \dd \theta \dd \tau, \quad \forall i \in \mN.
\end{equation}
We also obtain for every $i \in \mN$
\begin{equation*}
    \frac{\partial^2 J}{\partial p_i^2} = \frac{1}{T_s} \int_0^{T_s} \int_{C_i} \left( \frac{\partial \psi_i}{\partial p_i} \right)^\tp \frac{\partial \psi_i}{\partial p_i} + \rho \frac{\partial^2 \psi}{\partial p_i^2} \dd \theta \dd \tau. 
\end{equation*}
By computing $\frac{\partial^2 \psi_i}{\partial p_i^2}$ from \eqref{eq:dpsi_dpi} and by referring to the stationary condition \eqref{eq:stationary}, we have 
\begin{equation*}
    \frac{1}{T_s} \int_0^{T_s} \int_{C_i} \rho^* \frac{\partial^2 \psi_i^*}{\partial p_i^2} \dd\theta \dd\tau = 0, \quad \forall i \in \mN,
\end{equation*}
where $\rho^*:= \rho(\bp^*, \theta, \tau)$ and $\psi_i^*:= \psi_i(p^*_i, \theta, \tau)$.
Therefore,
\begin{equation*}
    \small
    \frac{\partial^2 J^*}{\partial p_i^2} = \frac{2 \alpha_i k_i^2}{[(p_{ix}^*+r_s)^2 + p_{iy}^{*2}]^2} \begin{bmatrix} -p_{iy}^* \\ p_{ix}^* + r_s \end{bmatrix} \begin{bmatrix} -p_{iy}^* & p_{ix}^* + r_s \end{bmatrix}.
\end{equation*}
\normalsize
We also note that $\frac{\partial}{\partial p_j} \frac{\partial \psi_i}{\partial p_i} = 0$ for $j \neq i$. Hence 
\begin{equation*}
    \frac{\partial^2 J}{\partial p_i \partial p_j} = \frac{1}{T_s} \int_0^{T_s} \int_{\omega \tau}^{\omega \tau + 2\pi} \left( \frac{\partial \psi_j}{\partial p_j} \right)^\tp \frac{\partial \psi_i}{\partial p_i} \dd \theta \dd \tau.
\end{equation*}
From \eqref{eq:psi_i}, we see that $\frac{\partial^2 J}{\partial p_i \partial p_j} \neq 0$ if and only if satellite $i$ and satellite $j$ have an overlapped coverage region. Since only adjacent neighbors are considered, for satellite $i$, we have
\begin{equation*}
    \begin{split}
        &\frac{\partial^2 J}{\partial p_i \partial p_{i+1}} = \frac{1}{T_s} \int_0^{T_s} \int_{C_i \cap C_{i+1}} \left( \frac{\partial \psi_{i+1}}{\partial p_{i+1}} \right)^\tp \frac{\partial \psi_i}{\partial p_i} \dd \theta \dd \tau \\
        &= \left(\frac{\partial \psi_{i+1}}{\partial p_{i+1}} \right)^\tp \frac{\partial \psi_{i}}{\partial p_{i}} (\phi_i - \phi_{i+1} + \alpha_i + \alpha_{i+1}).
    \end{split}
\end{equation*}
Likewise, 
\begin{equation*}
\small
    \frac{\partial^2 J}{\partial p_i \partial p_{i-1}} = \left(\frac{\partial \psi_{i-1}}{\partial p_{i-1}} \right)^\tp \frac{\partial \psi_{i}}{\partial p_{i}} (\phi_{i-1} - \phi_{i} + \alpha_i + \alpha_{i-1}).
\end{equation*}\normalsize

A configuration involves multiple satellites and hence has different coverage scenarios. We divide all coverage scenarios into two categories and show that our DPGD algorithm generates the local NE for each of them.

\subsubsection{Full-Overlap Coverage Scenario}
In this scenario, all satellites share overlapped coverage regions with its neighbors. i.e., for satellite $i$, $C_i$ overlaps with both $C_{i+1}$ and $C_{i-1}$. 
Since only neighbors are coupled, the Hessian $\frac{\partial^2 J}{\partial \bp^2}$ has a banded structure. Then, we arrive at the following proposition to characterize the limiting point $\bp^*$. 

\begin{proposition}[Full-overlap]
\label{prop:full-overlap}
Let $\bp^* \in \R^{2n}$ be the limiting point generated by the DPGD Alg.~\ref{alg:1} and assume that $\bp^* \in \interior \bOmega$. Then, $\bp^*$ is a local minimum of $J$ defined in \eqref{eq:cost} if 
\begin{equation*}
    \abs{\phi_i^0 - \phi_{i+1}^0 + \Delta \phi_i - \Delta \phi_{i+1} + \alpha_i + \alpha_{i+1}} \leq \sqrt{\alpha_i \alpha_{i+1}}
\end{equation*}
for all $i = 1,\dots, n-1$. For $i=n$, the term $\phi^0_n - \phi^0_1$ is changed to $2\pi - \phi^0_n - \phi^0_1$ due to periodicity.
\end{proposition}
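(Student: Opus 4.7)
The plan is to verify that $\bp^*$ is a local minimum of the potential $J$ by checking a second-order sufficient condition on the tangent space of $\bOmega$ at $\bp^*$. Since $\bp^*\in\interior\bOmega$, Proposition \ref{prop:convergence} gives $\partial J^*/\partial p_i = 0$ for every $i\in\mN$, so the first-order condition is automatic and what remains is to show that the Hessian $H := \partial^2 J^*/\partial\bp^2$ is positive semidefinite along the orbital-tangent directions.

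First, I would collect $H$ from the block formulas already derived in the excerpt. Writing $r_i^2 := (p_{ix}^*+r_s)^2 + p_{iy}^{*2}$, $\hat{u}_i := r_i^{-1}[-p_{iy}^*,\ p_{ix}^*+r_s]^\tp$, and $\nu_i := (k_i/r_i)\hat{u}_i$, the stationary identity $\int_0^{T_s}\!\int_{C_i}\rho^*\,\partial^2\psi_i^*/\partial p_i^2\,\dd\theta\,\dd\tau = 0$ reduces the diagonal block to the rank-one PSD matrix $\partial^2 J^*/\partial p_i^2 = 2\alpha_i\,\nu_i\nu_i^\tp$. Adjacent off-diagonal blocks, supported on the overlap $C_i^+\cap C_{i+1}^-$, become $\partial^2 J^*/\partial p_i\partial p_{i+1} = -L_{i,i+1}\,\nu_i\nu_{i+1}^\tp$ with overlap length $L_{i,i+1} := \phi_i-\phi_{i+1}+\alpha_i+\alpha_{i+1}$. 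The minus sign is inherited from the fact that $\partial\psi_i/\partial p_i$ carries coefficient $+k_i$ on $C_i^+$ while $\partial\psi_{i+1}/\partial p_{i+1}$ carries coefficient $-k_{i+1}$ on $C_{i+1}^-$; the $\omega\tau$ dependence in $\phi_i-\phi_{i+1}$ cancels, so $L_{i,i+1}$ matches the hypothesis expression $\phi_i^0-\phi_{i+1}^0+\Delta\phi_i-\Delta\phi_{i+1}+\alpha_i+\alpha_{i+1}$, with the wrap-around at $i=n$ arising from a $2\pi$ periodic offset.

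Second, for an arbitrary variation $\xi = (\xi_1,\dots,\xi_n)\in\R^{2n}$, introduce the scalar projections $s_i := \nu_i^\tp\xi_i$. Direct expansion of the block tridiagonal cyclic quadratic form collapses to
\begin{equation*}
    \xi^\tp H\xi \;=\; 2\Bigl(\sum_{i=1}^n \alpha_i s_i^2 \;-\; \sum_{i=1}^n L_{i,i+1}\,s_i s_{i+1}\Bigr),
\end{equation*}
with indices taken modulo $n$. I would then apply the weighted AM-GM inequality $2\sqrt{\alpha_i\alpha_{i+1}}\,|s_is_{i+1}| \le \alpha_i s_i^2 + \alpha_{i+1}s_{i+1}^2$ termwise; combined with the hypothesis $|L_{i,i+1}| \le \sqrt{\alpha_i\alpha_{i+1}}$ and summed cyclically (so that each $\alpha_i s_i^2$ appears exactly once on the right), this yields $\sum_i L_{i,i+1}\,s_is_{i+1} \le \sum_i \alpha_i s_i^2$, and hence $\xi^\tp H\xi\ge 0$.

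Finally, I would pass from ambient PSD to a genuine local minimum on $\bOmega$ by noting that each arc $\Omega_i$ is one-dimensional with tangent spanned by $\hat{u}_i$, so admissible variations have the form $\xi_i = t_i\hat{u}_i$ and the projections $s_i$ parametrize them faithfully; strict inequality in any one of the hypotheses yields a strict local minimum on the tangent space, and the boundary equality case leaves only a residual null direction whose higher-order expansion of $J$ can be checked to be non-negative. The main delicate step is the sign bookkeeping in the first part: the fact that \emph{every} adjacent off-diagonal block inherits the same uniform $-L_{i,i+1}$ sign, rather than an alternating pattern between the two overlaps $C_i^+\cap C_{i+1}^-$ and $C_i^-\cap C_{i-1}^+$ touching satellite $i$, is precisely what lets the scalar form in step two collapse to a clean difference against which AM-GM applies.
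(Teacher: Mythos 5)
Your proof is correct and follows essentially the same route as the paper's: both exploit the rank-one structure of every Hessian block ($\partial^2 J^*/\partial p_i^2 = 2\alpha_i\nu_i\nu_i^\tp$ after the stationarity cancellation, off-diagonal blocks $-L_{i,i+1}\nu_i\nu_{i+1}^\tp$ supported on the overlaps) and derive positive semidefiniteness of the cyclic block-tridiagonal Hessian from the threshold $\abs{L_{i,i+1}}\le\sqrt{\alpha_i\alpha_{i+1}}$. The only difference is presentational: the paper splits the quadratic form into pairwise $H_{i,i+1}$ blocks (with the $\tfrac12$ weights) and invokes a small PSD lemma, whereas you project globally onto the scalars $s_i=\nu_i^\tp\xi_i$ and apply AM--GM termwise --- the underlying inequality is identical.
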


\begin{proof}
We first introduce the following lemma.
\begin{lemma}
\label{lemma:2}
Let $x, y \in \R^n$ and $\beta \in \R$. The matrix 
\begin{equation*}
     A = \begin{bmatrix} xx^\tp & -\beta xy^\tp \\ -\beta yx^\tp & yy^\tp \end{bmatrix}
\end{equation*}
is positive semidefinite if $\abs{\beta} \leq 1$.
\end{lemma}
The proof of Lemma~\ref{lemma:2} is in Appendix \ref{app:2}.

Let $x = \{ x_1, \dots, x_n \} \in \R^{2n}$ be an arbitrary vector with $x_i \in \R^2$ as the $i$-th component. Using the banded structure of the Hessian, we have 
\begin{equation*}
    \small
    \begin{split}
        x ^\tp \frac{\partial^2 J}{\partial \bp^2} x &= \sum_{i=1}^n x_i^\tp \frac{\partial^2 J}{\partial p_i^2} x_i + \sum_{i=1}^n  x_i^\tp \frac{\partial^2 J}{\partial p_i \partial p_{i+1}} x_{i+1} \\
        &\quad + x_{i+1}^\tp \frac{\partial^2 J}{\partial p_{i+1} \partial p_i} x_{i} \\
        &= \sum_{i=1}^n \begin{bmatrix} x_i^\tp & x_{i+1}^\tp \end{bmatrix}
        \begin{bmatrix}
            \frac{1}{2} \frac{\partial^2 J}{\partial p_i^2} & \frac{\partial^2 J}{\partial p_i \partial p_{i+1}} \\ \frac{\partial^2 J}{\partial p_{i+1} \partial p_i} & \frac{1}{2} \frac{\partial^2 J}{\partial p_{i+1}^2}
        \end{bmatrix}
        \begin{bmatrix} x_i \\ x_{i+1} \end{bmatrix} \\
        &:= \sum_{i=1}^n \begin{bmatrix} x_i^\tp & x_{i+1}^\tp \end{bmatrix} H_{i,i+1} \begin{bmatrix} x_i \\ x_{i+1} \end{bmatrix}.
    \end{split}
\end{equation*}
\normalsize
Here, $x_{n+1}$ and $p_{n+1}$ refer to $x_1$ and $p_1$ respectively. We denote $v_i = k_i \frac{\partial \Delta \phi_i}{\partial p_i}$ and $\Delta_{i,i+1} = \phi_i - \phi_{i+1}+\alpha_i + \alpha_{i+1}$ for $i \in \mN$. An exception is $i=n$. $\Delta_{n,n+1}$ refers to $\Delta_{n,1}$ and should be changed to $\Delta_{n,1} = 2\pi - \phi_n^0 + \phi_1^0 - \Delta \phi_n + \Delta \phi_1 + \alpha_1 + \alpha_n$ due to periodicity.

From the previous analysis, we have $\frac{\partial^2 J}{\partial p_i^2} = 2\alpha_i v_i v_i^\tp$ and $\frac{\partial^2 J}{\partial p_i \partial p_{i+1}} = - v_i v_{i+1}^\tp \Delta_{i,i+1}$. The matrix $H_{i,i+1}$ becomes 
\begin{equation*}
    H_{i,i+1} = \begin{bmatrix} \alpha_i v_i v_i^\tp & -\Delta_{i,i+1} v_i v_{i+1}^\tp \\ -\Delta_{i,i+1} v_{i+1} v_i^\tp & \alpha_{i+1} v_{i+1} v_{i+1}^\tp
    \end{bmatrix}.
\end{equation*}
Using Lemma~\ref{lemma:2}, we obtain that $H_{i,i+1} \succeq 0$ if $\abs{\Delta_{i,i+1}} \leq \sqrt{\alpha_i, \alpha_{i+1}}$. If all matrices $H_{i,i+1} \succeq 0$ for $i \in \mN$, it is clear that the Hessian $\frac{\partial^2 J}{\partial \bp}$ is positive semidefinite. Therefore, $\bp^*$ is a local minimum of $J$. 
\end{proof}

The condition in Prop.~\ref{prop:full-overlap} indicates that two adjacent satellites should not get too close to each other. More specifically, for any two adjacent satellites sharing an overlapped coverage region, neither of the satellite's coverage regions should contain the center point of the other satellite's coverage region. Otherwise, the stationary point may not be optimal. This condition can be easily satisfied through design in practice.

\subsubsection{Non-overlap Coverage Scenario}
When satellites are far away from each other and do not overlap in coverage regions, the global coverage intensity $\rho$ in $C_i$ is simply the local coverage intensity $\psi_i$ for $i \in \mN$. The following characterizes this special coverage scenario.

\begin{proposition}[Non-overlap]
\label{prop:non-overlap}
Let $\bp^* \in \R^{2n}$ be the limiting point generated by the DPGD Alg.~\ref{alg:1} and assume that $\bp^* \in \interior \bOmega$. Then, $\bp^*$ is a local minimum of $J$ in \eqref{eq:cost} if all the satellites do not have any overlap in coverage regions.
\end{proposition}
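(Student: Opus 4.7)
The plan is to reuse as much of the Hessian computation developed for the full-overlap case as possible, and observe that in the non-overlap scenario the cross-partials vanish automatically, leaving a block-diagonal Hessian whose diagonal blocks are already known to be positive semidefinite.

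First I would observe that under the non-overlap hypothesis, $C_i \cap C_j = \varnothing$ for every $j \neq i$, and in particular $C_i \cap C_{i\pm 1} = \varnothing$. Looking at the formula for the cross partials derived just before the proposition,
\begin{equation*}
    \frac{\partial^2 J}{\partial p_i \partial p_j}
    = \frac{1}{T_s} \int_0^{T_s} \int_{C_i \cap C_j} \left( \frac{\partial \psi_j}{\partial p_j} \right)^\tp \frac{\partial \psi_i}{\partial p_i} \dd \theta \dd \tau,
\end{equation*}
each such block is a zero matrix. Thus the Hessian $\frac{\partial^2 J}{\partial \bp^2}$ evaluated at $\bp^*$ is block-diagonal, and its positive semidefiniteness reduces to checking each diagonal block $\frac{\partial^2 J^*}{\partial p_i^2}$ separately.

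Next I would reuse the diagonal-block computation already carried out in the excerpt. The derivation that led to
\begin{equation*}
    \frac{\partial^2 J^*}{\partial p_i^2} = \frac{2 \alpha_i k_i^2}{[(p_{ix}^*+r_s)^2 + p_{iy}^{*2}]^2} \begin{bmatrix} -p_{iy}^* \\ p_{ix}^* + r_s \end{bmatrix} \begin{bmatrix} -p_{iy}^* & p_{ix}^* + r_s \end{bmatrix}
\end{equation*}
only invoked (a) the explicit form of $\partial \psi_i / \partial p_i$ from \eqref{eq:dpsi_dpi}, and (b) the interior stationary condition \eqref{eq:stationary}, both of which remain valid here because $\bp^* \in \interior \bOmega$ by assumption. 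Hence each diagonal block has the rank-one form $2\alpha_i v_i v_i^\tp$ with $v_i = k_i\, \partial \Delta\phi_i/\partial p_i$, which is manifestly positive semidefinite.

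Combining these two observations, $\frac{\partial^2 J^*}{\partial \bp^2} = \mathrm{blockdiag}(2\alpha_1 v_1 v_1^\tp, \dots, 2\alpha_n v_n v_n^\tp) \succeq 0$, so together with the first-order stationarity implicit in $\bp^* \in \interior \bOmega$ the point $\bp^*$ is a local minimum of $J$, and by the potential-game identification of Prop.~\ref{prop:1} it is a local NE of $\mG$. I do not anticipate any real obstacle: unlike the full-overlap proof, there is no coupling to handle, so Lemma~\ref{lemma:2} is not needed and no geometric separation condition on the $\phi_i^0$'s arises. The only small point to be careful about is that the cancellation in the diagonal-block formula genuinely survives when the integrand $\rho^*$ on $C_i$ collapses to just $\psi_i^*$ (non-overlap), but this only strengthens the stationary-condition-based cancellation and does not change the resulting rank-one expression.
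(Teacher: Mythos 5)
Your proposal is correct and follows essentially the same route as the paper's own proof: the non-overlap hypothesis kills every cross block $\frac{\partial^2 J}{\partial p_i \partial p_j}$, the Hessian becomes block-diagonal with rank-one positive-semidefinite blocks $2\alpha_i v_i v_i^\tp$, and positive semidefiniteness of the whole Hessian follows. Your extra remark that the stationary-condition cancellation of the $\rho^*\,\partial^2\psi_i^*/\partial p_i^2$ term survives when $\rho^*$ reduces to $\psi_i^*$ on $C_i$ is a correct (and slightly more careful) observation than the paper makes explicit.
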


\begin{proof}
Since there is no overlap, the neighbors will not affect the coverage of satellite $i$. The second-order derivative $\frac{\partial^2 J}{\partial p_i \partial p_j} = 0$ for $j \neq i$. The Hessian $\frac{\partial^2 J}{\partial \bp^2}$ becomes a diagonal block matrix with $i$-th block as $\frac{\partial^2 J}{\partial p_i^2}$, which is a rank 1 matrix with positive coefficient and hence positive semidefinite. Therefore, $\frac{\partial^2 J}{\partial \bp^2}$ is also positive semidefinite, which shows that $\bp^*$ is a local minimum of $J$. 
\end{proof}

Proposition \ref{prop:non-overlap} indicates that there is no need to search for the new coverage configuration when satellites do not overlap in coverage regions. It is because the new configuration produces the same coverage performance as the current one. Steering the satellites away from the current non-overlapping configuration does not improve $J$. Instead, it will increase the fuel consumption. Therefore, the optimal strategy for non-overlapping scenarios is to keep them unchanged. 

\begin{remark}
When $\bp^* \in \interior \bOmega$, all satellites either have overlapped coverage regions with neighbors or have no overlaps at all. It means that satellites do not have one-sided overlapped coverage regions with neighbors. It can be seen by the stationary condition \eqref{eq:stationary}. However, one-sided coverage scenario is possible when $\bp^* \not\in \interior \bOmega$. Some satellites cannot go further if they reach the boundary of their feasible sets. We show in Sec.~\ref{sec:coverage_planning.general} that the DPGD Alg.~\ref{alg:1} can generate the local minimum of $J$ in more general coverage scenarios. 
\end{remark}

\subsection{General Coverage Scenarios} \label{sec:coverage_planning.general}
When the DPGD Alg.~\ref{alg:1} generates some limiting point $\bp^* \not\in \interior \bOmega$, we split the satellites into two groups based on whether $p^*_i \in \bd \Omega_i$ or not. Let $\mB = \{ i \in \mN: p^*_i \in \bd \Omega_i, \frac{\partial J^*}{\partial p_i} \neq 0 \}$ and $\overline{\mB} = \mN \backslash \mB$. 

For satellite $i \in \mB$, the stationary condition \eqref{eq:stationary} cannot be satisfied because $\norm{p^*_i}_2 = p^\mm$. However, we have the following proposition to characterize the satellites in $\mB$.

\begin{proposition}
\label{prop:local_opt}
Let $\bp^* \in \R^{2n}$ be the limiting point generated by the DPGD Alg.~\ref{alg:1}. Assume that the set $\mB \neq \varnothing$. Let $\delta_i(p_i) \subset \Omega_i$ be a small neighbor set containing $p_i$. Then, for satellite $i \in \mB$, we have
\begin{equation*}
    u_i(p^*_i, p^*_{-i}) \leq u_i(p_i', p^*_{-i}) \quad \forall p_i' \in \delta_i(p_i^*).  
\end{equation*}
\end{proposition}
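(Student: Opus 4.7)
The plan is to convert the claim about $u_i$ into a statement about the potential function $J$ along $\Omega_i$ and then deduce local optimality from the stopping condition of the DPGD algorithm. By Proposition \ref{prop:1}, $\mG$ is a potential game, so
\begin{equation*}
u_i(p_i', p^*_{-i}) - u_i(p^*_i, p^*_{-i}) = J(\{p_i', p^*_{-i}\}) - J(\{p^*_i, p^*_{-i}\})
\end{equation*}
for every $p_i' \in \Omega_i$. Hence it suffices to show $J(\{p^*_i, p^*_{-i}\}) \leq J(\{p_i', p^*_{-i}\})$ for all $p_i' \in \delta_i(p^*_i)$.

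Next, I would extract a first-order variational inequality from the fact that satellite $i$'s iterations terminated at $p^*_i$. For $i \in \mB$, the exit is triggered by $\norm{p_i^{\proj} - p_i^{(k)}}_2 < \epsilon_i$, so in the limit $p^*_i$ is a fixed point of the projected gradient map $p_i \mapsto \proj_{\Omega_i}\bigl(p_i - s\,\partial J/\partial p_i\bigr)$. This fixed-point property is equivalent to
\begin{equation*}
\left\langle \frac{\partial J^*}{\partial p_i},\, p_i' - p^*_i \right\rangle \geq 0, \quad \forall\, p_i' \in \delta_i(p^*_i),
\end{equation*}
i.e., $-\partial J^*/\partial p_i$ is not a feasible descent direction. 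Geometrically, $\Omega_i$ is a one-dimensional arc of the chief orbit and $p^*_i \in \bd \Omega_i$ is an endpoint with the radial constraint $\norm{p^*_i}_2 = p^\mm$ active, so the feasible directions at $p^*_i$ reduce to a single ray along the arc. Parametrizing a small arc-segment by $\gamma:[0, t_0] \to \Omega_i$ with $\gamma(0) = p^*_i$ and unit tangent $\tau_i = \gamma'(0)$, every $p_i' \in \delta_i(p^*_i)$ takes the form $p_i' = \gamma(t)$ for some small $t \geq 0$.

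Finally, a first-order Taylor expansion of $J(\cdot,\, p^*_{-i})$ along $\gamma$ yields
\begin{equation*}
J(\{\gamma(t), p^*_{-i}\}) - J(\{p^*_i, p^*_{-i}\}) = t\,\left\langle \frac{\partial J^*}{\partial p_i},\, \tau_i \right\rangle + O(t^2),
\end{equation*}
and the variational inequality forces the linear coefficient to be non-negative, giving the desired inequality for sufficiently small $t_0$. The main obstacle is the degenerate case in which $\partial J^*/\partial p_i$ is purely normal to the arc, so that $\langle \partial J^*/\partial p_i,\, \tau_i \rangle = 0$ despite $\partial J^*/\partial p_i \neq 0$; there the sign of the perturbation is determined by the $O(t^2)$ remainder. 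To handle it I would invoke the always-positive-semidefinite piece $(\partial \psi_i/\partial p_i)^\tp (\partial \psi_i/\partial p_i)$ of the Hessian $\partial^2 J^*/\partial p_i^2$ from the expansion preceding Proposition \ref{prop:full-overlap}, combined with the rank-one structure $2\alpha_i v_i v_i^\tp$ obtained after the boundary-modified analog of the stationarity identity \eqref{eq:stationary} is plugged in, to argue that the quadratic form along $\tau_i$ remains non-negative after possibly shrinking $\delta_i(p^*_i)$.
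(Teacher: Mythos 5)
Your overall route is the same as the paper's: reduce $u_i$ to the potential $J$ via Proposition~\ref{prop:1}, use the fixed-point property of the projected-gradient map at $p_i^*\in\bd\Omega_i$ to get $\frac{\partial J^*}{\partial p_i}(p_i'-p_i^*)\geq 0$ along the arc, and close with a first-order expansion. The one place you diverge is the "main obstacle" you flag at the end, and there your resolution is the wrong one. The degenerate case you worry about --- $\frac{\partial J^*}{\partial p_i}$ nonzero but purely normal to the arc --- cannot occur: by \eqref{eq:dpsi_dpi}, $\frac{\partial\psi_i}{\partial p_i}$ is always a scalar multiple of $\bigl[-p_{iy}\ \ p_{ix}+r_s\bigr]$, which is exactly the tangent direction of the circle $(p_{ix}+r_s)^2+p_{iy}^2=r_s^2$ defining $\Omega_i$; hence $\frac{\partial J^*}{\partial p_i}$, being an integral of $\rho\,\frac{\partial\psi_i}{\partial p_i}$, is itself a scalar multiple of that tangent and has no radial component at all. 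Since $i\in\mB$ means $\frac{\partial J^*}{\partial p_i}\neq 0$ by definition, the coefficient $\bigl\langle \frac{\partial J^*}{\partial p_i},\tau_i\bigr\rangle$ is strictly positive (the fixed-point condition rules out the negative sign), so the linear term strictly dominates the $O(t^2)$ remainder for small $t$ and no second-order analysis is needed. This tangentiality observation is precisely how the paper's proof closes the argument.

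Had the degenerate case been possible, your proposed fallback would not have saved you: the rank-one form $\frac{\partial^2 J^*}{\partial p_i^2}=2\alpha_i k_i^2 v_iv_i^\tp$ is derived using the stationarity identity \eqref{eq:stationary}, which by assumption fails for $i\in\mB$ (that is what puts $i$ in $\mB$). On the boundary the term $\frac{1}{T_s}\int_0^{T_s}\int_{C_i}\rho\,\frac{\partial^2\psi_i}{\partial p_i^2}\,\dd\theta\,\dd\tau$ does not vanish and has no definite sign, so there is no "boundary-modified analog" of \eqref{eq:stationary} to plug in. Fortunately this branch is vacuous, so your proof is repaired simply by replacing the last paragraph with the tangentiality remark above.
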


\begin{proof}
From Alg.~\ref{alg:1}, for satellite $i \in \mB$, the limiting point $p^*_i$ has the property such that $p^*_i = \proj_{\Omega_i} (p^*_i - s \frac{\partial J^*}{\partial p_i})$ for $s > 0$. 
From Fig.~\ref{fig:proof_fig_1} we can see that $\Omega_i$ represents a closed arc. The projection always projects $p_i^*$ back to the same point. 
Also from \eqref{eq:dpsi_dpi}, we observe that $\frac{\partial \psi_i}{\partial p_i}$ is never parallel to the radial direction of the arc. So is $\frac{\partial J^*}{\partial p_i}$. Since $\frac{\partial J^*}{\partial p_i} \neq 0$, the negative gradient must lie in the outward space specified by the radial direction (the gray region in Fig.~\ref{fig:proof_fig_1} if $p_i^*$ is the left boundary point of $\Omega_i$).
Let $p_i' \in \delta_i(p_i^*)$. Then, $p_i' - p_i^*$ is the tangential direction of $\Omega_i$ at $p_i^*$ if $p_i'$ is close enough to $p_i^*$. Therefore, we have $\frac{\partial J^*}{\partial p_i} (p_i' - p_i^*) \geq 0$. Using Taylor's Theorem, we have 
\begin{equation*}
\begin{split}
    u_i(p_i', p^*_{-i}) - u_i(p^*_i, p^*_{-i}) &= J(\{p_i', p^*_{-i}\}) - J(\{p^*_i, p^*_{-i}\}) \\ 
    \ &= \sum_{i \in \mB} \frac{\partial J^*}{\partial p_i} (p_i' - p_i^*) \geq 0,
\end{split}
\end{equation*}
which completes the proof.
\end{proof}

\begin{figure}
    \centering
    \includegraphics[scale=0.43]{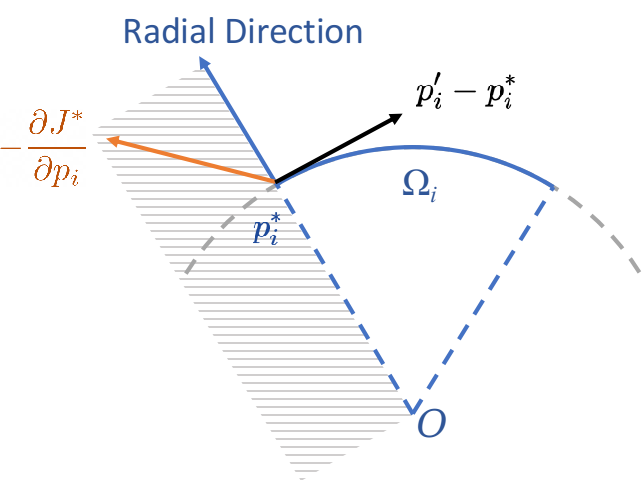}
    \caption{Sketch of $\Omega_i$ and a nonzero negative gradient $\frac{\partial J^*}{\partial p_i}$ at the left boundary point $p_i^*$ of $\Omega_i$. $p_i'-p_i^*$ forms a tangential vector if $p_i'$ is close enough to $p_i^*$.}
    \label{fig:proof_fig_1}
\end{figure}

Prop.~\ref{prop:local_opt} indicates that $\bp^*$ is already the local NE for satellite $i \in \mB$. For satellite $j \in \overline{\mB}$, since $p^*_j \in \interior \Omega_j$, we can use perturbation methods to check whether the rest $p^*_j$, $j \in \overline{\mB}$, constitute a local minimum of $J$. 
Let $x = \{ x_1, \dots, x_n \} \in \R^{2n}$ be an arbitrary vector with $x_i \in \R^2$ as the $i$-th component. We set $x_i = 0$ for $i \in \mB$ because the satellite $i \in \mB$ has already found a local NE and it has no incentive to deviate from the current solution. So there is no need to perturb satellite $i \in \mB$. Using the banded structure of the Hessian, we have
\begin{equation*}
\begin{split}
     & x^\tp \frac{\partial^2 J}{\partial \bp^2} x = \sum_{j \in \mathring{\mB}} x_j^\tp \frac{\partial^2 J}{\partial p_j^2} x_j 
     \\ & + \sum_{j, j+1 \in \overline{\mB}} \begin{bmatrix} x_j^\tp & x_{j+1}^\tp \end{bmatrix}
        \begin{bmatrix}
            \frac{1}{2} \frac{\partial^2 J}{\partial p_j^2} & \frac{\partial^2 J}{\partial p_j \partial p_{j+1}} \\ \frac{\partial^2 J}{\partial p_{j+1} \partial p_j} & \frac{1}{2} \frac{\partial^2 J}{\partial p_{j+1}^2}
        \end{bmatrix}
        \begin{bmatrix} x_j \\ x_{j+1} \end{bmatrix}.
\end{split}
\end{equation*}
Here, $\mathring{\mB} \subset \overline{\mB}$ represents the set of satellites adjacent to the satellites in $\mB$. The second summation requires both satellites $j$ and $j+1$ in $\overline{\mB}$. 
Therefore, if the condition in Prop.~\ref{prop:full-overlap} satisfies for satellite $j \in \overline{\mB}$, the Hessian $\frac{\partial J}{\partial \bp}$ is positive semidefinite, which further proves that $\bp^*$ is a local minimum of $J$. Hence $\bp^*$ constitutes a local NE for all satellites. 

\textcolor{black}{To summarize, the DPGD Alg.~\ref{alg:1} can generate a local NE $\bp^*$ of the coverage game $\mG$ under the following conditions:} for satellite $j \in \overline{\mB}$, i.e., $p^*_j \in \interior \Omega_j$, the condition in Prop.~\ref{prop:full-overlap} needs to be met; for satellite $i \in \mB$, i.e., $p^*_i \in \bd \Omega_i$, there is no extra condition.

\section{Distributed Coverage Control Synthesis} \label{sec:synthesis}
In this section, we propose the multi-waypoint model predictive control (mwMPC) for distributed and resilient satellite constellation control. We integrate the coverage game and the controller to develop the overall distributed framework for resilient satellite constellation coverage planning and control.

\subsection{Multi-Waypoint Model Predictive Control}
After obtaining the target configuration $\bp^\dd$ from the coverage game $\mG$, satellites control themselves autonomously to the target position by solving the constrained-LQR problem \eqref{eq:Qci}. Due to the convex input constraint, the analytic solution is hard to obtain. We reformulate \eqref{eq:Qci} into its discrete counterpart and solve it efficiently. Let $\Delta t \in \R_+$ be the sampling period interval and $N = T_f / \Delta t$. We denote $p_{i,k} \in \R^2, v_{i,k}\in \R^2$, and $u_{i,k}\in \R^2$ as the position, velocity, and external controls of satellite $i$ at time step $k$, $k=0,1,\dots, N-1$. We write $q_{i,k} = \{ p_{i,k}, v_{i,k} \} \in \R^4$, $q^\dd_i = \{ p^\dd_i, 0_{2\times 1} \} \in \R^4$, and $(A_d, B_d)$ as the corresponding discrete system dynamics of \eqref{eq:control}. We also use soft constraints to approximate the terminal constraints in \eqref{eq:Qci}. Let $\tilde{Q}_i \in \mathbb{S}^{4\times 4}$ be the augmented positive definite penalty matrix. The discrete counterpart can be written as
\begin{equation}
\small
    \label{eq:tilde_Qci}
    \tag{$\tilde{\mQ}_{ci}$}
    \begin{split}
        \min_{u_i} \quad & \norm{q_{i,N} - q_i^\dd}_{\tilde{Q}_i}^2 + \sum_{k=0}^{N-1} \norm{q_{i,k}-q_i^\dd}_{\tilde{Q}_i}^2 + \norm{u_{i,k}}_{R_i}^2 \\
        \text{s.t.} \quad & q_{i,k+1} = A_d q_{i,k} + B_d u_{i,k}, \quad k = 0, \dots, N-1, \\ 
        & u_{i,k}^\tp u_{i,k} \leq (u_{i}^\mm)^2, \quad k = 0, \dots, N-1.
    \end{split}
\normalsize
\end{equation}

In practice, satellite $i$ may not reach the target position $p^\dd_i$ after the time $T_f$ due to the small thrust. However, new attacks/incidents can happen during the satellite maneuver, causing further degeneration in the coverage performance. To cope with these issues, we can set multiple waypoints along the satellite trajectory. At each waypoint, attack detection is enabled so that the satellite can readjust its control to deal with the threat in time. 
More specifically, satellite $i$ sets $W_i$ waypoints $\{ \tilde{p}_i^{(m)} \}_{m=1}^{W_i}$ with $\tilde{p}_i^{(W_i)} = p_i^\dd$. If new threats occurs during the maneuver, the satellite goes to the nearest waypoint and restarts the DPGD algorithm to adapt to the new security environment. The mwMPC algorithm is designed in Alg.\ref{alg:2}.

\begin{algorithm}
\KwInit $p^\dd_i \gets$ DPGD Alg.~\ref{alg:1}, $\mathrm{atk}_i \gets $ false \; 
$\{ \tilde{p}_i^{(m)} \}_{m=1}^{W_i} \gets$ set $W_i$ waypoints \;
$m \gets 1$ ; \tcp{waypoint index} 
\For{$m = 1,\dots, W_i$}{
    $\tilde{q}_i^{(m)} \gets \{\tilde{p}_i^{(m)}, 0_{2\times 1}\}$ \;
    \While{$\norm{q_{i,N} - \tilde{q}_i^{(m)}}_2 \geq \epsilon$}{
        solve \eqref{eq:tilde_Qci} with $q^\dd_i = \tilde{q}_i^{(m)}$ \;
        $q_{i,0} \gets q_{i,1}$\;
    }
    \If{ \textnormal{attack\_detection()} }{
        $\mathrm{atk}_i \gets$ true \;
        broadcast $\mathrm{atk}_i$ \;
        break \;
    }
}

\KwFnb{
    \uIf{satellite is attacked}{
        \KwReturn true \;
    }
    \uElseIf{$\exists j \in \mN, j\neq i$, $\mathrm{atk}_j = $ true }{
    \KwReturn true \;
    }
    \Else{
    \KwReturn false \;
    }
}
\caption{mwMPC algorithm for satellite $i$.}
\label{alg:2}    
\end{algorithm}

\subsection{Distributed Coverage Planner and Controller Synthesis}
We consolidate the coverage game $\mG$ and mwMPC controller into the following DPGD-mwMPC framework.

\begin{algorithm}
    $\bp^d \gets$ DPGD planning algorithm (Alg. \ref{alg:1}) \;
    \For{Satellite $i=1$ \KwTo $n$ (in parallel)}{
        Receive $p^\dd_i$ \;
        run mwMPC$(p^\dd_i)$ algorithm (Alg. \ref{alg:2}).
    }
    \If{new threat detected}{
        wait for all satellites go to nearest waypoint \;
        \KwGoto DPGD planning algorithm (line 1)\;
    }
\caption{DPGD-mwMPC framework}
\label{alg:3}
\end{algorithm}

The DPGD-mwMPC framework enables a more flexible and resilient approach for space threats that affect coverage performance. Satellites can react to the new security environment in time at any waypoint. Besides, satellites can also assess the security level of the environment by choosing the number of waypoints. If the environment is secure enough, all satellites can simply set one waypoint during the maneuver. Otherwise, multiple waypoints can be set to monitor threats in real-time.

\section{Case Studies} \label{sec:case_study}
In this section, we demonstrate the resilience of our framework by experimenting with different types of space attacks. We consider a single-orbit satellite constellation with $n = 25$ homogeneous LEO satellites. Each satellite has the same coverage parameters $(\alpha, \psi^\mm)$ at the beginning. The constellation becomes heterogeneous when some satellites are attacked. We normalize the units here as some parameters are huge such as the earth radius. We define 1 distance unit (DU) as $10^6$m and 1 time unit (TU) as 100s. We set the orbital altitude $h=800$km ($=0.8$DU) and $FOV=48^\circ$, which are typical values of LEO satellites. The maximum thrust-to-mass ratio for each satellite is set as $u^\mm = 0.01$ DU/TU$^2$. The demand intensity $\mu(\theta)$ can be obtained by data in practice and we use a truncated multimodal normal distribution on $[0,2\pi)$ in the case study. We also normalize the coverage intensity by setting $\psi^\mm = 10$ for all satellites so that $\int_0^{2\pi} \mu(\theta) \dd\theta = \int_0^{2\pi} \rho(\bp, \theta, 0) \dd\theta$. The sampling period $\Delta t = 0.6$ TU and the control horizon $T_f = 18$ TU.

\subsection{Coverage under Cyber Attacks}
Cyber attacks such as jamming attacks can degenerate the coverage performance by reducing the FOV angle and the maximum coverage intensity $\psi^\mm_i$. Cyber attacks generally do not destroy the physical equipment in the satellite, and the attacked satellite may recover to some extent when the attack is over. Therefore, we consider the following attack-recovery plan for the satellite constellation, where $6$ satellites are attacked and recovered. The attacked satellite's coverage parameters are changed according to Tab.~\ref{tab:cyberattack}. 
They partially recover using the coverage parameters in recovery plan I and get fully recovered using recovery plan II.

\begin{table}[h]
    \centering
    \begin{tabular}{c|c|c|c}
        \hline
        \multirow{2}{*}{sat.\#} & attack & recovery I & recovery II \\ 
         & ($FOV,\psi_i^m$) & ($FOV,\psi_i^m$) & ($FOV,\psi_i^m$) \\ \hline
        1  & $(44,8)$ & $(48,10)$ & $(48,10)$ \\ 
        2  & $(42,7)$ & $(47,9)$  & $(48,10)$ \\
        3  & $(42,7)$ & $(44,8)$  & $(48,10)$ \\ \hline
        14 & $(44,8)$ & $(47,9)$  & $(48,10)$ \\ 
        15 & $(42,6)$ & $(45,7)$  & $(48,10)$ \\ \hline
        12 & $(46,8)$ & $(48,10)$ & $(48,10)$ \\ \hline
    \end{tabular}
    \caption{Attack-recovery plan for space cyber attacks.}
    \label{tab:cyberattack}
\end{table}

We divide the entire process into three phases: initialization, attack, and recovery phases. In the initialization phase, all satellites find a local optimal configuration given the IC. The IC can be arbitrarily assigned or can be the terminal configuration from the previous attack process, and the IC may not be optimal under the current security environment. The attack and the recovery phases show the reactions of the satellite constellation after the cyber attack and recovery, respectively. The entire processes of the two recovery plans are shown in Fig.~\ref{fig:cyberattack}. The coverage cost is computed based on the relative position vector $\bp$ at each control step.
Three phases are distinguished by the jumps in the coverage cost, which shows the destructiveness of the cyber attack to the coverage performance. 
Our framework shows the resilience of satellite control in all three phases. All satellites can not only adapt to the given IC, but also mitigate the attack consequence and reach a new local optimal configuration. The zoomed plots show that all satellites seek new configurations to actively improve coverage performance.
In the recovery phase, the attack is over and the attacked satellites recover partial (and full) coverage capability. Then, all satellites readjust the configuration based on the recovered capabilities to provide better coverage performance. Fig.~\ref{fig:cyberattack.2} successfully demonstrates that the coverage performance converges to the pre-attack level after the full recovery. 

In both experiments, we make the attacked satellites recover from the attack after they form a new local optimal configuration. It is to show the convergence of our DPGD algorithm. 

\begin{remark}
From Fig.~\ref{fig:cyberattack}, we learn that in the coverage planning stage, all satellites do not move and first communicate with neighbors to compute the new configuration using the DPGD algorithm. This process converges after many iterations and generates the target configuration $\bp^\dd$. Then satellites move to $\bp^\dd$ by using controls. Therefore, the convergence of the DPGD algorithm is shown in Fig.~\ref{fig:cyberattack}. 
\end{remark}

\begin{figure}
    \centering
    \begin{subfigure}[t]{0.45\textwidth}
        \centering
        \includegraphics[height=4.5cm]{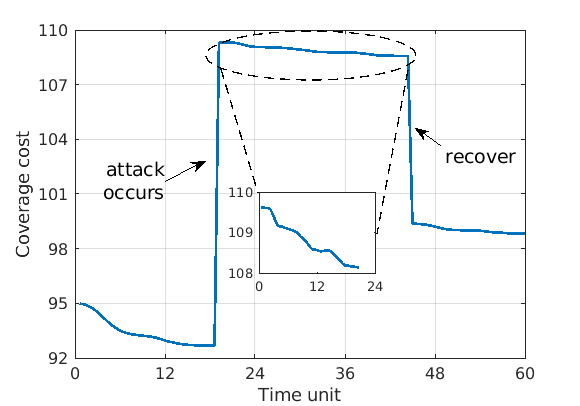}
        \caption{Partial recovery.}
        \label{fig:cyberattack.1}
    \end{subfigure}
    \hspace{5mm}
    \begin{subfigure}[t]{0.45\textwidth}
        \centering
        \includegraphics[height=4.5cm]{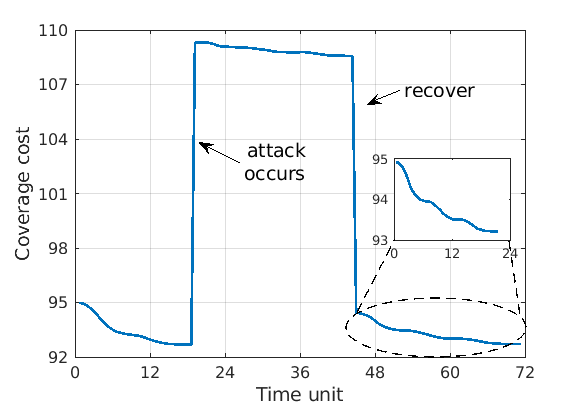}
        \caption{Full recovery.}
        \label{fig:cyberattack.2}
     \end{subfigure}
    \caption{Coverage cost evolution under cyber attacks.}
    \label{fig:cyberattack}
\end{figure}

\subsection{Cyber Attacks with Anchored Satellites}
In practice, some satellites are preferred not to actively move in a configuration. These satellites can be either the ones that have already run out of fuel, or the ones with more sophisticated payloads and need more fuel to adjust positions. These satellites reduce the feasible configurations when attacks occur, and the satellite constellation have more constraints to search for a new configuration to adapt to the threats. The discovered configuration may be less resilient for coverage control. We consider the same satellite constellation with four anchored satellites (satellite $\{6,12,18,24\}$) and use the same attack-recovery plan in Tab.~\ref{tab:cyberattack}. 

We compare the coverage cost evolution of anchored and non-anchored satellite constellations in Fig.~\ref{fig:anchor.1}. An immediate observation is the anchored constellation provides a higher coverage cost (poorer coverage performance) in each phase compared with the non-anchored constellation. It is because several satellites lose their mobility to jointly pursue a better configuration to adapt to attacks. 
We can observe from the attack phase in Fig.~\ref{fig:anchor.1} that the anchored constellation, although producing poorer coverage performance, can reach the new local optimal configuration faster than the non-anchored constellation, which means that anchored satellites result in less overall control effort. Indeed, Fig.~\ref{fig:anchor.2} compares the total control cost in three phases between anchored and non-anchored constellations. It shows the fundamental trade-off between the coverage performance (the resilience) and control cost. The more anchored satellites we have, the fewer controls we need to adapt to new attacks, but the less resilience we can obtain. 

On the other hand, the anchored constellation also improves time flexibility. If the coverage performance can be tolerated within a range, we can set some anchored satellites to react to attacks in a faster and fuel-saving way. \textcolor{black}{Since the response time is reduced, the anchored constellation is more flexible for the consecutive attacks.}
The anchored satellite constellation provides us a way to consider the trade-off between the resilience in coverage performance and the control cost under cyber attacks. If the coverage task has a higher priority, we need to dispatch all available satellites to actively mitigate the attack consequence. Otherwise, we can create several anchored satellites to reduce the control efforts. Whatever the situation we have, our framework can provably guarantee resilient coverage planning and control.

\begin{figure}
    \centering
    \begin{subfigure}[t]{0.45\textwidth}
        \centering
        \includegraphics[height=4.5cm]{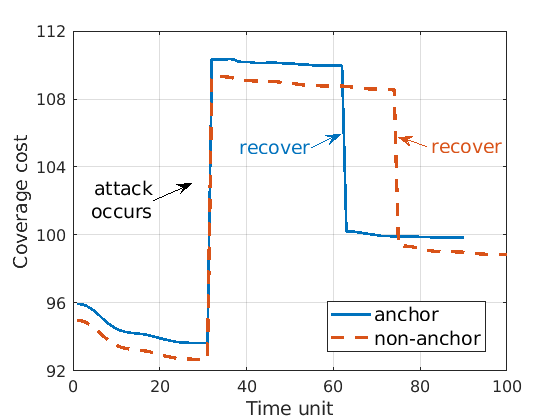}
        \caption{Coverage cost comparison.}
        \label{fig:anchor.1}
    \end{subfigure}
    \hspace{5mm}
    \begin{subfigure}[t]{0.45\textwidth}
        \centering
        \includegraphics[height=4.5cm]{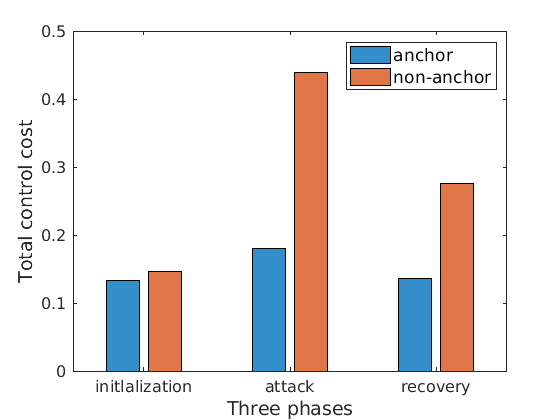}
        \caption{Control cost comparison.}
        \label{fig:anchor.2}
     \end{subfigure}
    \caption{Performance comparison for anchored and non-anchored satellite constellations.}
    \label{fig:anchor}
\end{figure}

\subsection{Coverage under Satellite Loss and Replenishment}
Many space threats such as debris and laser attacks can jeopardize the satellite structure or disable the satellite, resulting in satellite losses. The spare satellites have to be used to replenish the current configuration to keep providing normal coverage service. In this case, we consider the following loss-recovery plan for the satellite constellation. The index of disabled satellites and their neighbors are listed in Tab.~\ref{tab:neighbor_attack}. Two recovery plans replenish the configuration with two and three spares, respectively. The details of the spare's neighbors and coverage parameters are also included. Fig.~\ref{fig:neighbor_sketch} shows a sketch of how the satellites are attacked and replenished.

\begin{table}[h]
    \centering
    \begin{tabular}{c|c|c}
        \hline
        attack & recovery I & recovery II \\
        sat.\#-(nbr.\#) & sat.\#-(nbr.\#): ($FOV,\psi_i^m$) & sat.\#-(nbr.\#): ($FOV,\psi_i^m$) \\ \hline
        1-$(22,25)$  & \multirow{2}{*}{26-$(27,25)$: $(48,10)$} & 26-$(3,25)$: $(48,10)$ \\ 
        2-$(3,1)$    &     & 27-$(3,26)$: $(48,9)$ \\ \hline
        12-$(13,11)$ & \multirow{2}{*}{27-$(14,11)$: $(44,8)$} & \multirow{2}{*}{27-$(14,11)$: $(44,8)$} \\ 
        13-$(14,12)$ &     & \\ \hline
        22-$(23,21)$ & N/A & N/A \\ \hline
    \end{tabular}
    \caption{Loss-recovery plan for physical attacks.}
    \label{tab:neighbor_attack}
\end{table}

\begin{figure}[h]
    \centering
    \includegraphics[height=3.1cm]{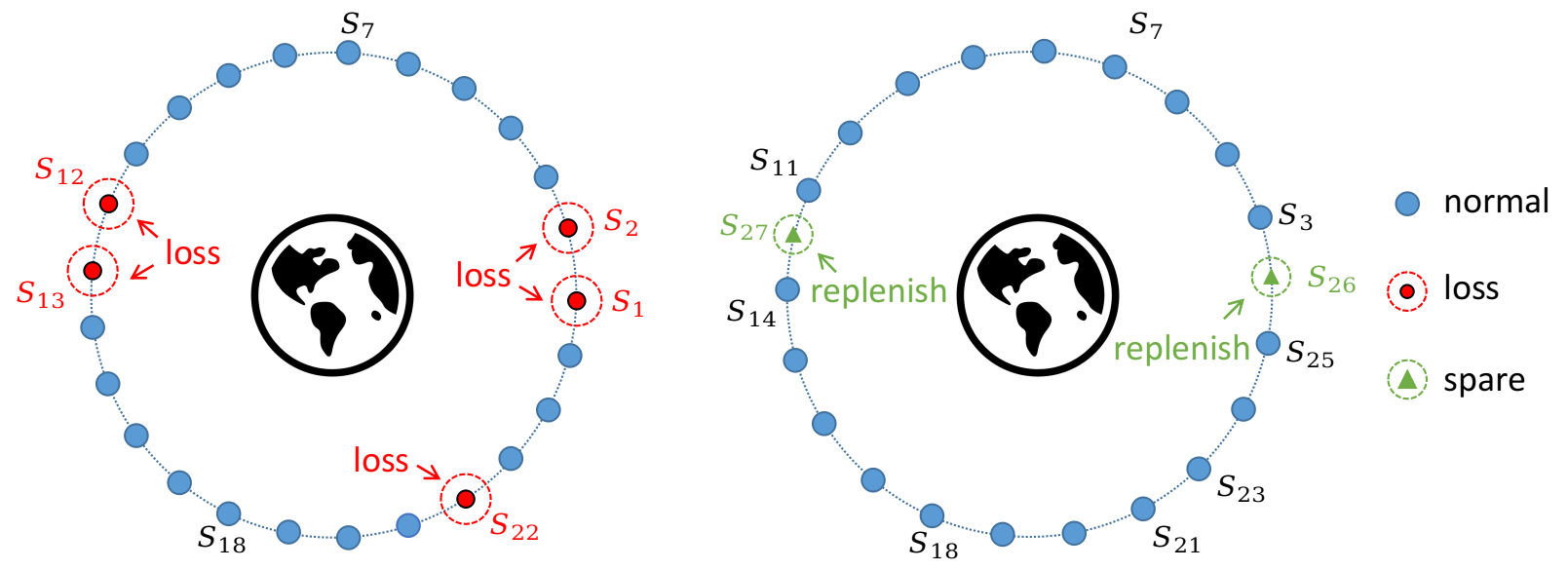}
    \caption{Sketch of loss-recovery plan.}
    \label{fig:neighbor_sketch}
\end{figure}

To demonstrate the team and individual attacks, we divide the attacked satellites into three groups. The first two groups contain two satellites, respectively, representing the team attack. Fig.~\ref{fig:neighbor.1} shows the coverage performance in three phases according the loss-recovery plan I in Tab.~\ref{tab:neighbor_attack}. After the initialization phase, the coverage cost dramatically increases due to satellite losses. The rest satellites start to adapt to attack and find a new local optimal configuration. After reaching the optimal configuration to mitigate the attack, we recover the configuration with two spares. The replenished satellite constellation begins to move to a better configuration for better coverage performance. We notice that the adaptation time in Fig.~\ref{fig:neighbor.1} is much longer than the one in Fig.~\ref{fig:cyberattack}. It is because physical attacks cause more severe damage to the constellation and the satellites have to use longer time and more control effort to mitigate the damage. Although the attack is more extreme, our DPGD algorithm can still provide convergent solutions and has successfully driven all satellites to the local optimal configuration.

We also compare the recovery phase of two recovery plans in Tab.~\ref{tab:neighbor_attack} and plot the coverage cost evolution of the recovery phase in Fig.~\ref{fig:neighbor.2}. We observe that using more spares to cope with physical attacks can generate better coverage performances and faster adaptation. This result also corroborates the one key principle: more redundancy leads to more resiliency.   

\begin{figure}
    \centering
    \begin{subfigure}[t]{0.45\textwidth}
        \centering
        \includegraphics[height=4.5cm]{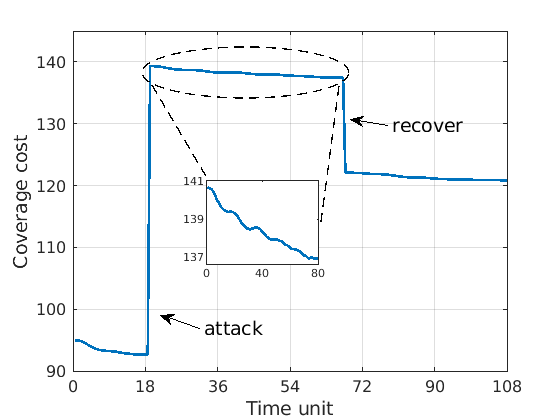}
        \caption{Recover 2 satellites.}
        \label{fig:neighbor.1}
    \end{subfigure}
    \hspace{5mm}
    \begin{subfigure}[t]{0.45\textwidth}
        \centering
        \includegraphics[height=4.5cm]{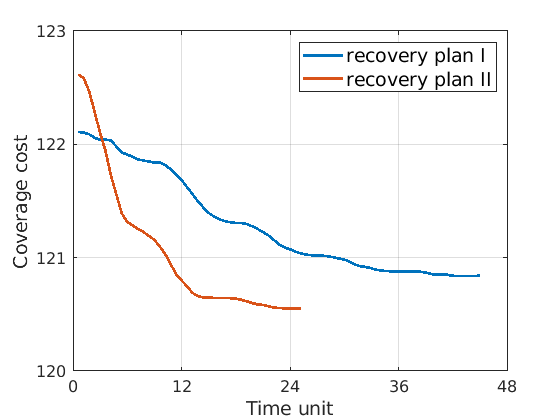}
        \caption{Coverage cost comparison.}
        \label{fig:neighbor.2}
     \end{subfigure}
    \caption{Coverage performance and comparison under physical attacks.}
    \label{fig:neighbor}
\end{figure}

\subsection{Comparison with Equal-Spacing Control Strategy}
We compare our DPGD-mwMPC framework with the equal-spacing control strategy, where all satellites aim to maintain an equal space with each other in the configuration. The equal-spacing control can be viewed as an averaging dynamical system. By using the ring structure of the satellite constellation, we can explicitly compute the final position given any IC. We experiment on the satellite constellation's reaction to the cyber attack in Tab.~\ref{tab:cyberattack} by using our framework and the equal-spacing control strategy. The results are shown in Fig.~\ref{fig:equalspace}. 

Compared with our framework, the equal-spacing control strategy provides a higher coverage cost (poorer coverage performance) to adapt to the attack, although it shows more time flexibility at the beginning. Thus, it is clear that an equal-spacing control strategy requires less control cost, shown in Fig.~\ref{fig:equalspace.2}.
However, one central weakness of the equal-spacing control strategy is that it provides zero resilience and robustness when the security environment changes. As shown in Fig.~\ref{fig:equalspace.1}, when the attacked satellites recover from the cyber attack, the equal-spacing strategy simply keeps the current configuration, showing no adaptation. For consecutive cyber attacks, the equal-spacing strategy does not respond either. For comparison, our framework has shown to be much more resilient and flexible for various space attacks. 

\begin{figure}
    \centering
    \begin{subfigure}[t]{0.45\textwidth}
        \centering
        \includegraphics[height=4.5cm]{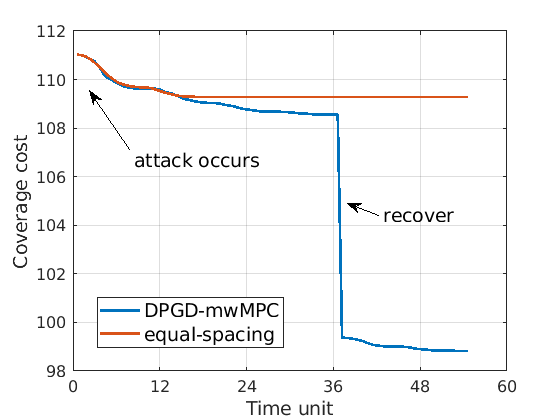}
        \caption{Coverage cost comparison.}
        \label{fig:equalspace.1}
    \end{subfigure}
    \hspace{5mm}
    \begin{subfigure}[t]{0.45\textwidth}
        \centering
        \includegraphics[height=4.5cm]{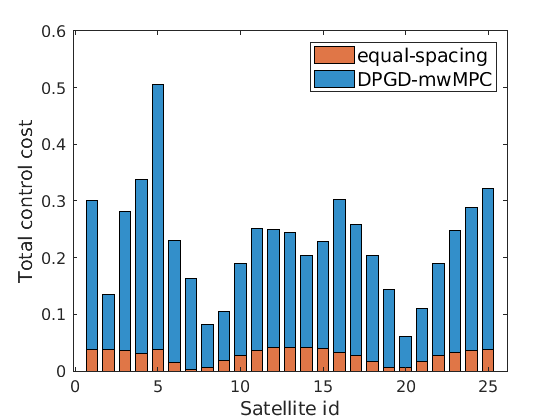}
        \caption{Control cost comparison.}
        \label{fig:equalspace.2}
     \end{subfigure}
    \caption{Comparisons of DPGD-mwMPC framework and equal-spacing control strategy under cyber attacks.}
    \label{fig:equalspace}
\end{figure}

\section{Conclusion} \label{sec:conclusion}
In this paper, we have investigated the satellite constellation coverage under different space security threats by establishing an integrative, distributed, and resilient planning-control framework. The proposed framework has not only captured the multi-objective of maximizing coverage performance and minimizing the total control effort for satellite constellation adaptation, but also improved the resilience of the satellite constellation coverage for adversarial and non-adversarial attacks. The proposed coverage game and the agent-based algorithm have shown effectiveness in searching for optimal coverage configuration in different coverage scenarios. The optimality of the coverage configuration and the convergence of the algorithm are also discussed in detail. The multi-waypoint MPC has achieved fuel-optimal control and provided more flexibility for the satellite constellation to deal with consecutive attacks. Case studies have demonstrated that our DPGD-mwMPC framework provides a solid resilience to cyber and physical attacks and outperforms the equal-spacing control strategy. For future work, we would consider the coordination and resilient control of multi-orbit satellite constellations. We would also investigate the impact of time-varying ground demand intensity on the satellite constellation's adaptation strategy.

\appendices
\section{Proof of Lemma \ref{lemma:1}}\label{app:1}
\begin{proof}
Using the fundamental theorem of calculus, we write $F(x) = \int_0^x f(\tau) \dd \tau + F(0)$ for some value $F(0)$. Then we have
\begin{equation*}
    G(x) = F(x+\delta) - F(x) = \int_{x}^{x+\delta} f(\tau) \dd\tau.
\end{equation*}
Note that 
\begin{equation*}
    \begin{split}
        G(x+T) &= F(x+\delta+T) - F(x+T) \\
        &= \int_{x+T}^{x+\delta+T} f(\tau) \dd \tau \overset{\tau=u-T}{=} \int_{x}^{x+\delta} f(u-T) \dd u \\
        &= \int_{x}^{x+\delta} f(u) \dd u = G(x),
    \end{split}
\end{equation*}
where we the use change of variable in the second line. So $G(x)$ is periodic with period $T$. We also have 
\begin{equation*}
    \begin{split}
        &\int_0^T G(x+\epsilon) \dd x \overset{u=x+\epsilon}{=} \int_{\epsilon}^{\epsilon+T} G(u) \dd u \\
        =& \int_0^T G(u) \dd u + \int_T^{T+\epsilon} G(u) \dd u - \int_0^\epsilon G(u) \dd u \\
        =& \int_0^T G(u) \dd u + \int_0^{\epsilon} G(z) \dd z - \int_0^\epsilon G(u) \dd u \\
        =& \int_0^T G(u) \dd u,
    \end{split}
\end{equation*}
where we use the change of variable $z=u-T$ in the third row. 
\end{proof}

\section{Proof of Lemma \ref{lemma:2}} \label{app:2}
\begin{proof}
Let $v \in \R^{2n}$ be an arbitrary vector. We split $v= [v_1 \ v_2]$ with $v_1, v_2 \in \R^n$. We want to show $v^\tp A v \geq 0$ for all $v \in \R^{2n}$. Indeed, we have 
\begin{equation*}
    \begin{split}
        v^\tp A v =& (v_1^\tp x)^2 + (v_2^\tp y)^2 - 2\beta (v_1^\tp x) (v_2^\tp y) \\
        :=& a^2 + b^2 -2\beta ab = f(a,b). 
    \end{split}
\end{equation*}
The result clearly holds when $x= 0$ or $y = 0$. When $x, y \neq 0$, we note that $a,b$ can be arbitrary values because $v_1$, $v_2$ are arbitrary vectors. To make $v^\tp A v \geq 0$, we need $f(a,b)$ be convex in both $(a,b)$ and $f_{\min} \geq 0$. By checking the Hessian of $f$, we can obtain $\abs{\beta} \leq 1$. 
\end{proof}


\ifCLASSOPTIONcaptionsoff
  \newpage
\fi

\bibliographystyle{IEEEtran}
\bibliography{satellite_resilient}

\begin{thebibliography}{10}
\providecommand{\url}[1]{#1}
\csname url@rmstyle\endcsname
\providecommand{\newblock}{\relax}
\providecommand{\bibinfo}[2]{#2}
\providecommand\BIBentrySTDinterwordspacing{\spaceskip=0pt\relax}
\providecommand\BIBentryALTinterwordstretchfactor{4}
\providecommand\BIBentryALTinterwordspacing{\spaceskip=\fontdimen2\font plus
\BIBentryALTinterwordstretchfactor\fontdimen3\font minus
  \fontdimen4\font\relax}
\providecommand\BIBforeignlanguage[2]{{%
\expandafter\ifx\csname l@#1\endcsname\relax
\typeout{** WARNING: IEEEtran.bst: No hyphenation pattern has been}%
\typeout{** loaded for the language `#1'. Using the pattern for}%
\typeout{** the default language instead.}%
\else
\language=\csname l@#1\endcsname
\fi
#2}}

\bibitem{xu2007gps}
G.~Xu and Y.~Xu, \emph{GPS}.\hskip 1em plus 0.5em minus 0.4em\relax Springer,
  2007.

\bibitem{roddy2006satellite}
D.~Roddy, \emph{Satellite communications}.\hskip 1em plus 0.5em minus
  0.4em\relax McGraw-Hill Education, 2006.

\bibitem{campbell2011introduction}
J.~B. Campbell and R.~H. Wynne, \emph{Introduction to remote sensing}.\hskip
  1em plus 0.5em minus 0.4em\relax Guilford Press, 2011.

\bibitem{ulybyshev2008satellite}
Y.~Ulybyshev, ``Satellite constellation design for complex coverage,''
  \emph{Journal of Spacecraft and Rockets}, vol.~45, no.~4, pp. 843--849, 2008.

\bibitem{lee2020satellite}
H.~W. Lee, S.~Shimizu, S.~Yoshikawa, and K.~Ho, ``Satellite constellation
  pattern optimization for complex regional coverage,'' \emph{Journal of
  Spacecraft and Rockets}, vol.~57, no.~6, pp. 1309--1327, 2020.

\bibitem{al2021optimal}
A.~Al-Hourani, ``Optimal satellite constellation altitude for maximal
  coverage,'' \emph{IEEE Wireless Communications Letters}, 2021.

\bibitem{zhen2020energy}
L.~Zhen, A.~K. Bashir, K.~Yu, Y.~D. Al-Otaibi, C.~H. Foh, and P.~Xiao,
  ``Energy-efficient random access for leo satellite-assisted 6g internet of
  remote things,'' \emph{IEEE Internet of Things Journal}, vol.~8, no.~7, pp.
  5114--5128, 2020.

\bibitem{giordani2020non}
M.~Giordani and M.~Zorzi, ``Non-terrestrial networks in the 6g era: Challenges
  and opportunities,'' \emph{IEEE Network}, vol.~35, no.~2, pp. 244--251, 2020.

\bibitem{yaacoub2020key}
E.~Yaacoub and M.-S. Alouini, ``A key 6g challenge and opportunity—connecting
  the base of the pyramid: A survey on rural connectivity,'' \emph{Proceedings
  of the IEEE}, vol. 108, no.~4, pp. 533--582, 2020.

\bibitem{al2021analytic}
A.~Al-Hourani, ``An analytic approach for modeling the coverage performance of
  dense satellite networks,'' \emph{IEEE Wireless Communications Letters},
  vol.~10, no.~4, pp. 897--901, 2021.

\bibitem{okati2020downlink}
N.~Okati, T.~Riihonen, D.~Korpi, I.~Angervuori, and R.~Wichman, ``Downlink
  coverage and rate analysis of low earth orbit satellite constellations using
  stochastic geometry,'' \emph{IEEE Transactions on Communications}, vol.~68,
  no.~8, pp. 5120--5134, 2020.

\bibitem{savitri2017satellite}
T.~Savitri, Y.~Kim, S.~Jo, and H.~Bang, ``Satellite constellation orbit design
  optimization with combined genetic algorithm and semianalytical approach,''
  \emph{International Journal of Aerospace Engineering}, vol. 2017, 2017.

\bibitem{hitomi2018constellation}
N.~Hitomi and D.~Selva, ``Constellation optimization using an evolutionary
  algorithm with a variable-length chromosome,'' in \emph{2018 IEEE Aerospace
  Conference}.\hskip 1em plus 0.5em minus 0.4em\relax IEEE, 2018, pp. 1--12.

\bibitem{liu2020space}
Z.~Liu, C.~Lin, and G.~Chen, ``Space attack technology overview,'' in
  \emph{Journal of Physics: Conference Series}, vol. 1544, no.~1.\hskip 1em
  plus 0.5em minus 0.4em\relax IOP Publishing, 2020, p. 012178.

\bibitem{reiffen1964parametric}
B.~Reiffen and H.~Sherman, ``Parametric analysis of jammed active satellite
  links,'' \emph{IEEE Transactions on Communications Systems}, vol.~12, no.~1,
  pp. 102--103, 1964.

\bibitem{rausch2006jamming}
H.~Rausch, ``Jamming commercial satellite communications during wartime an
  empirical study,'' in \emph{Fourth IEEE International Workshop on Information
  Assurance (IWIA'06)}.\hskip 1em plus 0.5em minus 0.4em\relax IEEE, 2006, pp.
  8--pp.

\bibitem{manulis2020cyber}
M.~Manulis, C.~Bridges, R.~Harrison, V.~Sekar, and A.~Davis, ``Cyber security
  in new space: Analysis of threats, key enabling technologies and
  challenges,'' \emph{International Journal of Information Security}, pp.
  1--25, 2020.

\bibitem{collision_2009}
\BIBentryALTinterwordspacing
Wikipedia, ``2009 satellite collision,'' 2009, accessed: Jan 20, 2022.
  [Online]. Available:
  \url{https://en.wikipedia.org/wiki/2009_satellite_collision}
\BIBentrySTDinterwordspacing

\bibitem{walker1984satellite}
J.~G. Walker, ``Satellite constellations,'' \emph{Journal of the British
  Interplanetary Society}, vol.~37, p. 559, 1984.

\bibitem{walker1970circular}
------, ``Circular orbit patterns providing continuous whole earth coverage,''
  Royal Aircraft Establishment Farnborough (United Kingdom), Tech. Rep., 1970.

\bibitem{luders1961satellite}
R.~D. Luders, ``Satellite networks for continuous zonal coverage,'' \emph{ARS
  Journal}, vol.~31, no.~2, pp. 179--184, 1961.

\bibitem{rider1986analytic}
L.~Rider, ``Analytic design of satellite constellations for zonal earth
  coverage using inclined circular orbits,'' \emph{Journal of the Astronautical
  Sciences}, vol.~34, pp. 31--64, 1986.

\bibitem{beste1978design}
D.~C. Beste, ``Design of satellite constellations for optimal continuous
  coverage,'' \emph{IEEE Transactions on Aerospace and Electronic Systems},
  no.~3, pp. 466--473, 1978.

\bibitem{lang1998comparison}
T.~J. Lang and W.~S. Adams, ``A comparison of satellite constellations for
  continuous global coverage,'' in \emph{Mission design \& implementation of
  satellite constellations}.\hskip 1em plus 0.5em minus 0.4em\relax Springer,
  1998, pp. 51--62.

\bibitem{draim1987common}
J.~E. Draim, ``A common-period four-satellite continuous global coverage
  constellation,'' \emph{Journal of Guidance, Control, and Dynamics}, vol.~10,
  no.~5, pp. 492--499, 1987.

\bibitem{mortari2004flower}
D.~Mortari, M.~P. Wilkins, and C.~Bruccoleri, ``The flower constellations,''
  \emph{The Journal of the Astronautical Sciences}, vol.~52, no.~1, pp.
  107--127, 2004.

\bibitem{irnss}
``Indian regional navigation satellite system (irnss),''
  \url{https://www.isro.gov.in/irnss-programme}, accessed: 2021-06-10.

\bibitem{qzss}
``Quasi-zenith satellite system (qzss),'' \url{https://qzss.go.jp/en/},
  accessed: 2021-06-10.

\bibitem{hanson1990designing}
J.~Hanson, M.~Evans, and R.~Turner, ``Designing good partial coverage satellite
  constellations,'' in \emph{Astrodynamics conference}, 1990, p. 2901.

\bibitem{wang2008optimization}
L.~Wang, Y.~Wang, K.~Chen, and H.~Zhang, ``Optimization of regional coverage
  reconnaissance satellite constellation by nsga-ii algorithm,'' in \emph{2008
  International Conference on Information and Automation}.\hskip 1em plus 0.5em
  minus 0.4em\relax IEEE, 2008, pp. 1111--1116.

\bibitem{meziane2016optimization}
I.~Meziane-Tani, G.~M{\'e}tris, G.~Lion, A.~Deschamps, F.~T. Bendimerad, and
  M.~Bekhti, ``Optimization of small satellite constellation design for
  continuous mutual regional coverage with multi-objective genetic algorithm,''
  \emph{International Journal of Computational Intelligence Systems}, vol.~9,
  no.~4, pp. 627--637, 2016.

\bibitem{wang2019self}
Z.~Wang, Y.~Xu, C.~Jiang, and Y.~Zhang, ``Self-organizing control for satellite
  clusters using artificial potential function in terms of relative orbital
  elements,'' \emph{Aerospace Science and Technology}, vol.~84, pp. 799--811,
  2019.

\bibitem{izzo2007autonomous}
D.~Izzo and L.~Pettazzi, ``Autonomous and distributed motion planning for
  satellite swarm,'' \emph{Journal of Guidance, Control, and Dynamics},
  vol.~30, no.~2, pp. 449--459, 2007.

\bibitem{nag2013behaviour}
S.~Nag and L.~Summerer, ``Behaviour based, autonomous and distributed scatter
  manoeuvres for satellite swarms,'' \emph{Acta Astronautica}, vol.~82, no.~1,
  pp. 95--109, 2013.

\bibitem{massioni2010decomposition}
P.~Massioni, T.~Keviczky, E.~Gill, and M.~Verhaegen, ``A decomposition-based
  approach to linear time-periodic distributed control of satellite
  formations,'' \emph{IEEE transactions on control systems technology},
  vol.~19, no.~3, pp. 481--492, 2010.

\bibitem{zhu2015game}
Q.~Zhu and T.~Basar, ``Game-theoretic methods for robustness, security, and
  resilience of cyberphysical control systems: games-in-games principle for
  optimal cross-layer resilient control systems,'' \emph{IEEE Control Systems
  Magazine}, vol.~35, no.~1, pp. 46--65, 2015.

\bibitem{zhu2020crosslayer}
Q.~Zhu and Z.~Xu, \emph{Cross-layer design for secure and resilient
  cyber-physical systems: a decision and game theoretic approach}.\hskip 1em
  plus 0.5em minus 0.4em\relax Springer, 2020.

\bibitem{chen2019dynamic}
J.~Chen, C.~Touati, and Q.~Zhu, ``A dynamic game approach to strategic design
  of secure and resilient infrastructure network,'' \emph{IEEE Transactions on
  Information Forensics and Security}, vol.~15, pp. 462--474, 2019.

\bibitem{chen2021transactive}
J.~Chen, Y.~Huang, and Q.~Zhu, ``Transactive resilience in renewable
  microgrids: A contract-theoretic approach,'' \emph{arXiv preprint
  arXiv:2103.17089}, 2021.

\bibitem{chen2020agame}
J.~Chen and Q.~Zhu, \emph{A game-and decision-theoretic approach to resilient
  interdependent network analysis and design}.\hskip 1em plus 0.5em minus
  0.4em\relax Springer, 2020.

\bibitem{chen2019control}
------, ``Control of multilayer mobile autonomous systems in adversarial
  environments: A games-in-games approach,'' \emph{IEEE Transactions on Control
  of Network Systems}, vol.~7, no.~3, pp. 1056--1068, 2019.

\bibitem{huang2020dynamic}
Y.~Huang, J.~Chen, L.~Huang, and Q.~Zhu, ``Dynamic games for secure and
  resilient control system design,'' \emph{National Science Review}, vol.~7,
  no.~7, pp. 1125--1141, 2020.

\bibitem{he2008dynamics}
Q.~He and C.~Han, ``Dynamics and control of satellite formation flying based on
  relative orbit elements,'' in \emph{AIAA Guidance, Navigation and Control
  Conference and Exhibit}, 2008, p. 6470.

\bibitem{tong2007relative}
C.~Tong, X.~Shijie, and W.~Songxia, ``Relative motion control for autonomous
  rendezvous based on classical orbital element differences,'' \emph{Journal of
  guidance, control, and dynamics}, vol.~30, no.~4, pp. 1003--1014, 2007.

\bibitem{clohessy1960terminal}
W.~Clohessy and R.~Wiltshire, ``Terminal guidance system for satellite
  rendezvous,'' \emph{Journal of the Aerospace Sciences}, vol.~27, no.~9, pp.
  653--658, 1960.

\bibitem{stupik2012optimal}
J.~Stupik, M.~Pontani, and B.~Conway, ``Optimal pursuit/evasion spacecraft
  trajectories in the hill reference frame,'' in \emph{AIAA/AAS astrodynamics
  specialist conference}, 2012, p. 4882.

\bibitem{sullivan2017comprehensive}
J.~Sullivan, S.~Grimberg, and S.~D’Amico, ``Comprehensive survey and
  assessment of spacecraft relative motion dynamics models,'' \emph{Journal of
  Guidance, Control, and Dynamics}, vol.~40, no.~8, pp. 1837--1859, 2017.

\bibitem{ye2020satellite}
D.~Ye, M.~Shi, and Z.~Sun, ``Satellite proximate pursuit-evasion game with
  different thrust configurations,'' \emph{Aerospace Science and Technology},
  vol.~99, p. 105715, 2020.

\bibitem{gong2020pursuit}
H.~Gong, S.~Gong, and J.~Li, ``Pursuit--evasion game for satellites based on
  continuous thrust reachable domain,'' \emph{IEEE Transactions on Aerospace
  and Electronic Systems}, vol.~56, no.~6, pp. 4626--4637, 2020.

\bibitem{ratliff2016characterization}
L.~J. Ratliff, S.~A. Burden, and S.~S. Sastry, ``On the characterization of
  local nash equilibria in continuous games,'' \emph{IEEE transactions on
  automatic control}, vol.~61, no.~8, pp. 2301--2307, 2016.

\bibitem{monderer1996potential}
D.~Monderer and L.~S. Shapley, ``Potential games,'' \emph{Games and economic
  behavior}, vol.~14, no.~1, pp. 124--143, 1996.

\end{thebibliography}

\end{document}